\pdfoutput=1
\documentclass{article}
\usepackage{arxiv}
\usepackage[utf8]{inputenc}
\usepackage[T1]{fontenc}
\usepackage{amsmath,amssymb,amsthm}
\usepackage{mathtools}
\usepackage{enumitem}
\usepackage{graphicx}
\usepackage{booktabs}
\usepackage{url}
\usepackage[round,semicolon,authoryear]{natbib}
\usepackage{hyperref}
\hypersetup{colorlinks=true, urlcolor=blue, linkcolor=blue, citecolor=blue}
\usepackage{doi}
\makeatletter
\def\@doi#1{%
  \let\#\relax \let\_\relax \let\textless\relax \let\textgreater\relax
  \edef\x{\toks0={{\doiurl#1}}}\x
  \edef\#{\@percentchar23}\edef\_{_}%
  \edef\textless{\@percentchar3C}\edef\textgreater{\@percentchar3E}%
  \edef\x{\toks2={\noexpand\href{\doiurl#1}}}\x
  \edef\x{\endgroup\the\toks2 \the\toks0}\x}
\makeatother
\renewcommand{\textsc}[1]{\leavevmode{\normalfont\scshape #1}}
\providecommand{\Description}[1]{}

\theoremstyle{definition}
\newtheorem{definition}{Definition}
\newtheorem{problem}{Problem}

\newtheorem{remark}{Remark}

\theoremstyle{plain}
\newtheorem{theorem}{Theorem}
\newtheorem{proposition}{Proposition}
\newtheorem{lemma}{Lemma}
\newtheorem{corollary}{Corollary}

\newtheorem{observation}{Observation}

\newcommand{\credits}{w}
\newcommand{\taken}{\mathrm{taken}}
\newcommand{\pre}{\mathrm{pre}}
\newcommand{\ttd}{\mathrm{ttd}}
\newcommand{\load}{\mathrm{load}}
\newcommand{\cl}{\mathrm{cl}}

\newcommand{\N}{\mathbb{N}}
\newcommand{\Zpos}{\mathbb{Z}_{>0}}
\newcommand{\Znn}{\mathbb{Z}_{\ge 0}}
\newcommand{\eqdef}{\vcentcolon=}

\title{The Plan Language of a Curriculum: A Formal Model and the Complexity of Degree Planning}
\author{
 Sherzod Turaev\thanks{Corresponding author.} \\
  Department of Computer Science and Software Engineering \\
  College of Computing and Artificial Intelligence \\
  United Arab Emirates University, Al Ain, United Arab Emirates \\
  \texttt{sherzod@uaeu.ac.ae} \\
  \And
  Mary John \\
  Academic Support Department \\
  Abu Dhabi Polytechnic, Abu Dhabi, United Arab Emirates \\
  \texttt{mary.john@actvet.gov.ae} \\
  \And
  Mamoun Awad \\
  Department of Computer Science and Software Engineering \\
  College of Computing and Artificial Intelligence \\
  United Arab Emirates University, Al Ain, United Arab Emirates \\
  \texttt{mamoun.awad@uaeu.ac.ae} \\
}
\date{}

\begin{document}
\maketitle

\begin{abstract}
We model an academic curriculum as a generator of a language of feasible study plans: prerequisites are monotone Boolean formulas in conjunctive normal form, degree requirements are credit-threshold covering constraints, and a study plan is a sequence of terms bounded by a per-term credit capacity. Within this model we settle the complexity of the two natural planning objectives, the number of terms to a degree and the total credit load, and we isolate the structural commitment responsible for each source of hardness. Time to degree is polynomial whenever the per-term capacity is unbounded, for arbitrary disjunctive prerequisites and arbitrary electives, so disjunction never contributes to its hardness, yet it becomes strongly NP-hard as soon as capacity binds, even without any prerequisite. Load is complementary: disjunction and overlapping electives are each strongly NP-hard in isolation and their complexity does not depend on capacity, while load is polynomial on the conjunctive, mandatory fragment. The two objectives therefore have disjoint sources of hardness. We show that the delay-factor component of the standard curricular-complexity metric is a polynomially computable upper bound on time to degree, exact on the conjunctive fragment and loose elsewhere by a quantity we name the disjunctive slack, and we prove that program subsumption is coNP-complete and consensus prerequisite recovery is NP-complete. Instantiating the model on a corpus of twenty-two universities, we find that 88 percent of prerequisite-bearing courses are purely conjunctive and that capacity, not prerequisite logic, is the operative constraint on time to degree. The curriculum corpus is openly available (\url{https://doi.org/10.5281/zenodo.22334674}), and the analysis and figure-generation code accompany the paper.
\end{abstract}

\keywords{curricular analytics \and prerequisite structure \and computational complexity \and degree planning \and scheduling \and knowledge spaces}

\section{Introduction}\label{sec:intro}

The structure of an academic program, the web of prerequisites, requirements, and permitted sequences through which a student must pass, is among the most consequential artifacts a university produces, and the field of \emph{curricular analytics} has made that structure an object of measurement. In the line of work established by Slim, Heileman, and their collaborators, a curriculum is modeled as a directed graph whose vertices are courses and whose arcs are prerequisite relations, and from that graph a family of structural scores, chiefly the \emph{delay factor}, the \emph{blocking factor}, and their aggregate \emph{curricular complexity}, is derived and used to compare programs, to anticipate the courses at which students accumulate, and to guide reform \citep{slim2014,heileman2018}. These scores are now computed as a matter of routine, reported by institutions, and treated as summaries of how demanding a program is to complete.

The graph on which these scores rest is, however, a simplification of the artifact it represents, and the simplification suppresses exactly the features that a printed catalog takes care to state. A prerequisite in a catalog is seldom a single course; it is a monotone Boolean condition, a conjunction of requirements in which individual clauses offer alternatives, so that a course may demand one of several equivalent predecessors together with a second course outright. A degree requirement is seldom a fixed list of mandatory courses; it is a \emph{covering constraint} that asks for a stated number of credits from a designated pool, leaving the student a choice of electives. And a path through a program is not a topological order of the whole graph but a sequence of \emph{terms}, each bounded by a credit ceiling and by which courses a term actually offers. A parallel body of work has engaged these features operationally rather than analytically: degree-plan planners such as the KnowPlan system compile catalog data into a structured plan representation and search for individual degree pathways \citep{cao2026}, and learning-analytics models such as the CAPIRE curriculum graph engineer structural features from prerequisite networks to model student progression \citep{paz2025}. The classical theory of scheduling under precedence constraints, and the combinatorial theory of knowledge spaces, each supply pieces of the relevant mathematics \citep{ullman1975,gl1995,df1985}.

Though each of these lines is mature, they have not been joined, and a prior question has gone unasked. No account treats a curriculum as a formal object in its own right, defines the language of study plans that complete it, and asks which questions about that language are decidable, which are tractable, and where real curricula fall relative to the boundary between them. The omission is not merely foundational tidiness. Our companion empirical study \citep{turaev2026} shows that at the institutions whose catalogs publish the most alternatives, between a third and a half of the prerequisite-bearing courses carry a disjunction that the all-conjunctive reading, on which every complexity score depends, silently treats as an obligation, while at the institutions that publish none the two readings coincide. The metrics, in other words, rest on an assumption that the data contradict, yet there is no theory that explains the discrepancy, no result that says which structural commitment is responsible for the difficulty of planning a degree, and no measurement of whether the hard cases the mathematics admits are cases that actually occur.

To overcome these drawbacks, in the current research we treat a curriculum as a generator of a formal language, namely the language of the feasible study plans that complete it, and we determine the computational character of the questions this formulation makes precise. We define the model so that its only commitments beyond a precedence graph are the three the catalog states, the Boolean form of prerequisites, the covering form of requirements, and the per-term capacity, and we then attribute the difficulty of degree planning to those commitments individually. The main contributions of the paper are the following.

\begin{enumerate}[label=(\arabic*),nosep,leftmargin=2.2em]
  \item We introduce the \emph{plan language} of a curriculum and establish its basic theory: recognition of a plan is linear, and on the conjunctive fragment the shortest plan is the longest prerequisite chain, computable in linear time.
  \item We prove that the time to degree is computable in polynomial time whenever the per-term capacity is unbounded, for arbitrary disjunctive prerequisites and arbitrary electives, and that it becomes strongly NP-hard the moment capacity binds, already in the absence of any prerequisite; disjunction, contrary to the natural expectation, never contributes to the hardness of this objective.
  \item We prove that the load objective is complementary, being polynomial on the conjunctive \emph{and mandatory} fragment yet strongly NP-hard through either of two independent commitments, disjunction or overlapping electives, each in isolation and each independent of capacity, and we draw from this the paper's organizing result, that the two planning objectives have \emph{disjoint} sources of hardness, capacity for time and, for load, disjunction or electives.
  \item We exhibit the delay-factor component of the standard curricular-complexity metric as a polynomially computable upper bound on the true time to degree, exact on the conjunctive fragment and loose elsewhere by a quantity we call the \emph{disjunctive slack}, thereby giving the empirical correction of our companion study \citep{turaev2026} a formal explanation.
  \item We settle the model's two comparison problems, showing that deciding whether one program subsumes another is coNP-complete and that recovering a consensus prerequisite relation across several curricula is NP-complete, each under strong structural restrictions, and we instantiate the entire model on a corpus of twenty-two universities, locating real curricula on the tractable side of every boundary but capacity.
\end{enumerate}

The remainder of the paper is organized as follows. Section~\ref{sec:related} reviews the five bodies of work the formulation draws together and states precisely how the present contribution differs from each. Section~\ref{sec:prelim} fixes the notation and the standard facts. Section~\ref{sec:model} defines the model, the plan language, and the two objectives, and Section~\ref{sec:recognition} settles recognition. Section~\ref{sec:timefree-sec} proves the tractability of time to degree without a capacity bound, and Section~\ref{sec:capacity} its hardness once capacity binds. Section~\ref{sec:load} treats the load objective and establishes the hardness duality. Section~\ref{sec:metric-sec} connects the model to the curricular-complexity metric, and Section~\ref{sec:empirical} reports the empirical boundary across the corpus. Section~\ref{sec:compare} settles the two comparison problems, and Section~\ref{sec:summary} concludes.

\section{Related work}\label{sec:related}

\paragraph{Curricular analytics and complexity metrics.} The measurement of curricular structure through graph-theoretic scores originates with the cruciality, delay, and blocking factors of \citet{slim2014} and the aggregate complexity framework of \citet{heileman2018}, which has since been applied to program comparison and reform across many institutions. That framework is, by construction, a functional of the prerequisite graph under the all-conjunctive reading, and it does not model disjunctive prerequisites, credit-threshold electives, or term capacity. The present work does not propose a new score. It supplies the formal model in which the prerequisite graph underlying such scores is a single reading, the conjunctive collapse, and this allows us to state in Section~\ref{sec:metric-sec} precisely what the delay-factor component of the metric measures and under which structural condition it overstates the time to degree.

\paragraph{Degree-plan optimization.} A separate line treats the construction of an individual student's plan as an optimization problem, most recently the KnowPlan system, which compiles catalog data into a structured representation and searches for personalized degree pathways \citep{cao2026}. This work is solution-oriented: it computes good plans for given instances, and it neither classifies the complexity of the underlying questions nor asks where real curricula sit relative to tractability. Our contribution is complementary and logically prior to it, in that we characterize which planning questions are tractable and identify the single structural commitment, capacity, that renders the time objective intractable and that any such optimizer must therefore confront.

\paragraph{Scheduling under precedence.} The hardness results we prove draw on the classical theory of precedence-constrained scheduling. Minimizing the makespan of unit jobs under a partial-order precedence on parallel machines is NP-hard when the number of machines is part of the input \citep{ullman1975}, and scheduling under AND/OR precedence has been studied in its own right \citep{gl1995,mss2004}. We use these results as the source of the capacity hardness in Section~\ref{sec:capacity}, and we contribute the observation, which the scheduling literature does not frame in curricular terms, that the AND/OR structure of prerequisites is by itself tractable for the time objective, so that, for that objective, the hardness is carried entirely by the capacity constraint; elective covering enters only the load objective.

\paragraph{Knowledge spaces.} The combinatorial theory of knowledge spaces \citep{df1985,fd2011} studies the states of knowledge reachable under a prerequisite relation and connects them to closure systems and lattices, and it is the closest existing formalization of prerequisite structure. It is developed, however, at the level of assessment items rather than of credit-bearing courses scheduled into capacity-bounded terms, and it does not address the optimization and comparison questions that a degree, as opposed to a knowledge state, raises. Our plan language may be seen as a scheduling-aware and requirement-aware extension of that combinatorial picture.

\paragraph{Prediction from curriculum structure.} Finally, a learning-analytics line uses structural features of the prerequisite network as predictors of student progression, as in the CAPIRE curriculum graph \citep{paz2025}. Such models are empirical and predictive rather than analytical, and they take the structural features as given. The present work instead asks what those features mean computationally, and it supplies the exact quantities, time to degree, load, and the disjunctive slack, that a predictive model of progression would ideally rest upon.

\section{Preliminaries: notation and notions}\label{sec:prelim}

This section fixes the notation and collects the standard definitions and results on which the later sections rely. The two observations follow immediately from the definitions; the remaining results are classical and are stated without proof, with references given.

\subsection{Sets, weights, and monotone Boolean formulas}

We write $\N = \Znn = \{0,1,2,\dots\}$ and $\Zpos = \{1,2,\dots\}$. For a finite set $V$ the power set is $2^{V}$ and the cardinality is $|V|$. Given a weight map $\credits : V \to \Zpos$ and a subset $X \subseteq V$ we abbreviate $\credits(X) = \sum_{v \in X} \credits(v)$, with $\credits(\emptyset) = 0$.

A \emph{Boolean formula} over the variable set $V$ is \emph{monotone} if it is built from the variables using only conjunction ($\wedge$) and disjunction ($\vee$), with no negation \citep{cramahammer}. It is in \emph{conjunctive normal form} (CNF) if it is a conjunction of \emph{clauses}, each clause a disjunction of variables. A subset $X \subseteq V$ induces the truth assignment that sets $v$ to true exactly when $v \in X$; we write $X \models \varphi$ when this assignment satisfies $\varphi$, and adopt the convention that the empty conjunction is $\top$ (satisfied by every $X$) and the empty disjunction is $\bot$ (satisfied by no $X$).

\begin{observation}[Clause-wise satisfaction]\label{fact:clause}
For a CNF formula $\varphi = \bigwedge_{j=1}^{k}\big(\bigvee_{u \in A_j} u\big)$ and any $X \subseteq V$, we have $X \models \varphi$ if and only if $A_j \cap X \neq \emptyset$ for every $j \in \{1,\dots,k\}$.
\end{observation}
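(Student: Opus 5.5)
The plan is to unfold the definitions in two layers, first for a single clause and then for the conjunction of clauses. In both layers the only inputs are the semantics of $\wedge$ and $\vee$ under the assignment induced by $X$, and the stated conventions for the empty conjunction and the empty disjunction.

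\emph{Step 1: a single clause.} Fix one clause $C_j = \bigvee_{u \in A_j} u$. Under the assignment induced by $X$, a variable $u$ evaluates to true exactly when $u \in X$. A disjunction is true exactly when at least one disjunct is true. Hence $X \models C_j$ holds if and only if some $u \in A_j$ lies in $X$, that is, if and only if $A_j \cap X \neq \emptyset$.

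The degenerate case $A_j = \emptyset$ needs a separate check. The clause is then the empty disjunction $\bot$, which by convention no $X$ satisfies. This agrees with $A_j \cap X = \emptyset$, so the equivalence still holds.

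\emph{Step 2: the conjunction.} A conjunction is true exactly when every conjunct is true. Therefore $X \models \varphi$ if and only if $X \models C_j$ for all $j \in \{1,\dots,k\}$. Substituting Step 1 for each $j$ gives the stated criterion.

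The degenerate case $k = 0$ again needs a check. Then $\varphi$ is the empty conjunction $\top$, which every $X$ satisfies. The right-hand condition is a vacuous universal over an empty index set, so it is also true for every $X$, and the two sides agree.

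There is no real obstacle in this argument. The only point that needs attention is that the two empty-formula conventions fixed above are exactly the ones that make the equivalence hold uniformly, including in the degenerate cases. I would state those two cases explicitly rather than leave them implicit.
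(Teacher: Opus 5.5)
Your proposal is correct and takes the paper's approach: the paper only remarks that the observation follows directly from the definitions. Your two-layer unfolding, first a single clause and then the conjunction, with explicit checks of the empty-disjunction ($\bot$) and empty-conjunction ($\top$) conventions, is that argument written out in full.
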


\begin{observation}[Monotonicity of satisfaction]\label{fact:monotone}
If $\varphi$ is a monotone Boolean formula, $X \models \varphi$, and $X \subseteq Y \subseteq V$, then $Y \models \varphi$.
\end{observation}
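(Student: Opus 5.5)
The plan is to unfold the semantics of satisfaction for a monotone formula and check the claim by structural induction on $\varphi$. Since a monotone formula is built from variables, $\top$ and $\bot$ using only $\wedge$ and $\vee$, the natural route is to prove the stronger statement: for every monotone subformula $\psi$ of $\varphi$, if $X \models \psi$ and $X \subseteq Y$, then $Y \models \psi$.

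The cases are as follows.
\begin{itemize}
\item Constants: $\top$ is satisfied by every set, and $\bot$ by none, so the implication holds trivially in both cases.
\item A single variable $v$: the assignment induced by $X$ sets $v$ true exactly when $v \in X$. From $X \models v$ we get $v \in X \subseteq Y$, hence $Y \models v$.
\item $\psi = \psi_1 \wedge \psi_2$: if $X$ satisfies both conjuncts, the induction hypothesis gives that $Y$ satisfies both, so $Y \models \psi$.
\item $\psi = \psi_1 \vee \psi_2$: if $X$ satisfies some disjunct $\psi_i$, the induction hypothesis gives $Y \models \psi_i$, so $Y \models \psi$.
\end{itemize}

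The paper only uses CNF formulas, and for these there is a more direct argument via Observation~\ref{fact:clause}. If $X \models \varphi$, then every clause set $A_j$ satisfies $A_j \cap X \neq \emptyset$. Since $X \subseteq Y$, we have $A_j \cap Y \supseteq A_j \cap X \neq \emptyset$ for every $j$. Observation~\ref{fact:clause} then gives $Y \models \varphi$. I would give this CNF argument as the main proof and mention the induction as the general case.

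There is no real obstacle here. The only point needing care is that negation is excluded. Negation is exactly the connective whose inductive step fails: $X \models \neg v$ does not imply $Y \models \neg v$. The induction therefore goes through precisely because the formula is monotone.
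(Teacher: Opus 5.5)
Your structural induction is correct and is exactly the paper's argument; the paper says only that the observation follows by induction on the structure of $\varphi$. Note that the CNF argument via Observation~\ref{fact:clause} covers only the CNF case used later in the paper, so it is the induction, not the CNF argument, that proves the observation as stated for arbitrary monotone formulas.
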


Both observations follow directly from the definitions, the second by induction on the structure of $\varphi$ \citep{cramahammer}.

\subsection{Digraphs, orders, and longest paths}

A \emph{directed graph} (digraph) is a pair $(V,E)$ with $E \subseteq V \times V$. For $v \in V$ its \emph{in-neighbours} are $\{u : (u,v) \in E\}$; a \emph{source} has none. A \emph{directed path} is a sequence $v_1, v_2, \dots, v_t$ of distinct vertices with $(v_s, v_{s+1}) \in E$ for all $s$; its \emph{length} is the number of vertices $t$ on it. The digraph is a \emph{directed acyclic graph} (DAG) if it contains no directed cycle. A \emph{topological order} is a linear order of $V$ in which every arc goes forward.

\begin{proposition}[Topological order and longest paths \citep{clrs}]\label{fact:topo}
A digraph admits a topological order if and only if it is a DAG. Given a DAG $(V,E)$ and a weight on vertices, a topological order, and the length of a longest directed path ending at each vertex, can all be computed in time $O(|V| + |E|)$ by a single dynamic-programming pass in topological order.
\end{proposition}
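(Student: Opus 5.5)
The plan proves the two directions of the first sentence separately and then treats the linear-time computation as a dynamic program over a topological order.

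\emph{Topological order implies acyclicity.} Let $\pi$ be a topological order and suppose $v_1,\dots,v_t,v_1$ were a directed cycle. Every arc goes forward in $\pi$, so $\pi(v_1)<\pi(v_2)<\dots<\pi(v_t)<\pi(v_1)$, which is a contradiction.

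\emph{Acyclicity implies a topological order.} The argument is by induction on $|V|$, and it rests on the claim that every nonempty DAG has a source. To see the claim, walk backwards along in-arcs from any vertex. Since $V$ is finite, a walk of more than $|V|$ steps must repeat a vertex, which would give a directed cycle. So the walk halts, and it can only halt at a vertex with no in-neighbours. Removing a source leaves a DAG, and placing that source first in front of a topological order of the remainder gives a topological order of $(V,E)$. This is the step I expect to need the most care. It is the only place where finiteness and acyclicity are used in an essential way, and the backward-walk argument has to be stated so that the repeated vertex genuinely yields a cycle.

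\emph{Linear-time topological order.} I would implement Kahn's procedure. First compute all in-degrees in $O(|V|+|E|)$ and place every source in a queue. Then repeatedly dequeue a vertex, append it to the output, and decrement the in-degree of each of its out-neighbours, enqueueing any neighbour whose in-degree drops to zero. Each vertex is dequeued once and each arc is examined once, so the total time is $O(|V|+|E|)$. Correctness follows from the source claim above: while unprocessed vertices remain, the subgraph they induce is a DAG and therefore has a source, so the queue never empties early. The same procedure also detects a cycle, since in that case fewer than $|V|$ vertices are ever output.

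\emph{Longest paths.} Given a vertex weight $c$, where $c\equiv 1$ counts vertices and recovers the paper's notion of length, define
\[
L(v)=c(v)+\max\bigl(\{0\}\cup\{L(u):(u,v)\in E\}\bigr).
\]
Correctness of the recurrence is proved by induction along the topological order. Every directed path ending at $v$ is either $v$ alone or a path ending at some in-neighbour $u$ followed by $v$. Distinctness of vertices is automatic, because the graph is acyclic. All in-neighbours of $v$ precede $v$ in the order, so their values are already final when $L(v)$ is computed. Evaluating $L$ in topological order touches each arc once, which gives $O(|V|+|E|)$ in total.
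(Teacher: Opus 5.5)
Your proof is correct, and it takes a different route from the source the paper relies on. The paper gives no proof of this proposition. It lists it among the classical facts stated without proof and defers to CLRS.

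In CLRS the construction goes through depth-first search. A topological order is produced by listing vertices in decreasing finishing time. Acyclicity is characterized by the absence of back edges, proved with the white-path and parenthesis theorems. Longest paths in a DAG then come from relaxing arcs in that order, with the comparison reversed.

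You instead build everything on one lemma: a nonempty DAG has a source. Source-peeling gives the equivalence by induction. Kahn's queue algorithm turns the same lemma into the linear-time construction and detects cycles as a by-product. Your recurrence over in-neighbours is the same recurrence the paper later uses for $\ell$ in Definition~\ref{def:closure}, and, with a $\min$ inside, for $h$.

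What your route buys is a more elementary, self-contained argument with no DFS machinery. It also matches the statement's ``single pass'' more literally if you fuse the dynamic program into Kahn's loop. When $v$ is dequeued, all its in-neighbours have already been output, so $L(v)$ is final and can be pushed to its out-neighbours.

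Two small steps deserve to be stated outright. First, the output of Kahn's procedure is a topological order because a vertex enters the queue only after every in-neighbour has been output. Second, your claim that ``the queue never empties early'' rests on an invariant. The counter of a never-enqueued vertex equals its in-degree within the set of not-yet-output vertices. So when the queue is empty, a source of the remaining vertices would have counter zero and would already have been enqueued, which is a contradiction.
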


\subsection{Computational complexity}

We assume the standard notions of decision problem and of the classes $\mathrm{P}$, $\mathrm{NP}$, and $\mathrm{coNP}$. A \emph{polynomial-time many-one (Karp) reduction} from a problem $A$ to a problem $B$, written $A \le_p B$, is a polynomial-time map sending yes-instances of $A$ to yes-instances of $B$ and no-instances to no-instances. A problem $B$ is \emph{NP-hard} if $A \le_p B$ for every $A \in \mathrm{NP}$, and \emph{NP-complete} if additionally $B \in \mathrm{NP}$; \emph{coNP-hardness} is defined dually.

\begin{proposition}[Transfer of hardness \citep{gj1979}]\label{fact:transfer}
If $A \le_p B$ and $A$ is NP-hard, then $B$ is NP-hard.
\end{proposition}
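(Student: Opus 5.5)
The statement is the classical transfer principle for Karp reductions, and I will derive it from the definitions of NP-hardness and of $\le_p$ just given, using that $\le_p$ is transitive. The plan is to take an arbitrary problem $C \in \mathrm{NP}$ and show $C \le_p B$. Since $A$ is NP-hard, there is a reduction $f$ witnessing $C \le_p A$, and by hypothesis there is a reduction $g$ witnessing $A \le_p B$. I will take the composite $g \circ f$ as the witness for $C \le_p B$.

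Two properties of the composite must be checked. \emph{Correctness} follows by chasing instances through both maps. If $x$ is a yes-instance of $C$, then $f(x)$ is a yes-instance of $A$, and so $g(f(x))$ is a yes-instance of $B$. If $x$ is a no-instance, the same two steps send it to a no-instance of $B$.

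\emph{Polynomial running time} is the only step that requires any care. Suppose $f$ runs in time $p(n)$ and $g$ runs in time $q(n)$, with $p$ and $q$ polynomials, which we may take to be nondecreasing. A machine running in time $p(n)$ cannot write more than $p(n)$ output symbols, so $|f(x)| \le p(|x|)$. Computing $g(f(x))$ therefore takes at most $p(|x|) + q(p(|x|))$ steps, and this bound is again a polynomial in $|x|$.

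Since $C \in \mathrm{NP}$ was arbitrary, $B$ is NP-hard by definition. I do not expect any real obstacle here. The one point worth stating explicitly is the output-length bound $|f(x)| \le p(|x|)$, because it is what keeps the composition polynomial.
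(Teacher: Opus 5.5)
Your proof is correct: composing the two Karp reductions, with the output-length bound $|f(x)| \le p(|x|)$ keeping the composite polynomial, is the standard transitivity argument for $\le_p$. The paper gives no proof of its own and cites the result as classical from Garey and Johnson, where it is obtained from the transitivity of polynomial reductions in the same way.
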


When an instance carries integers, its \emph{size} is measured with those integers written in binary. A problem is \emph{strongly NP-hard} if it remains NP-hard when every integer in the instance is bounded by a polynomial in the instance size, equivalently when the integers are written in unary \citep{gj1979}.

\begin{proposition}[Strong NP-hardness precludes pseudo-polynomial algorithms \citep{gj1979}]\label{fact:strong}
If a problem is strongly NP-hard, then it admits no algorithm running in time polynomial in the instance size and the largest integer value unless $\mathrm{P} = \mathrm{NP}$.
\end{proposition}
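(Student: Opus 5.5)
The statement is the classical fact from \citet{gj1979} that strong NP-hardness excludes pseudo-polynomial algorithms unless $\mathrm{P}=\mathrm{NP}$. I would prove it by contraposition. Let $\Pi$ be strongly NP-hard, and suppose an algorithm $\mathcal{A}$ decides $\Pi$ in time $p(n, M)$. Here $p$ is a polynomial, $n$ is the binary instance size, and $M$ is the largest integer in the instance. By the definition recalled above, there is a polynomial $q$ such that the restriction $\Pi_q$ of $\Pi$ to instances with $M \le q(n)$ is still NP-hard. The aim is to show that $\Pi_q \in \mathrm{P}$, which then forces $\mathrm{P}=\mathrm{NP}$.

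The steps, in order, are as follows.

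First, I describe a polynomial-time decision procedure for $\Pi_q$. Given an instance of size $n$, scan its integers and compare the largest one with $q(n)$. This takes polynomial time, since each comparison involves $O(\log q(n))$-bit numbers and there are at most $n$ integers. If $M \le q(n)$, run $\mathcal{A}$, which takes time $p(n, M) \le p(n, q(n))$, a polynomial in $n$. Instances outside the restriction do not belong to $\Pi_q$ and can be rejected.

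Second, since $\Pi_q$ is NP-hard, every $A \in \mathrm{NP}$ satisfies $A \le_p \Pi_q$. Composing that Karp reduction with the procedure above decides $A$ in polynomial time, because a composition of polynomials is a polynomial. Hence $\mathrm{NP} \subseteq \mathrm{P}$, and therefore $\mathrm{P}=\mathrm{NP}$.

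The main obstacle is not technical but definitional. The argument needs the restricted problem $\Pi_q$ to be recognizable, that is, the bound $M \le q(n)$ must be checkable in polynomial time. I would make this explicit by taking $q$ to be a fixed polynomial, so that $q(n)$ can be computed directly.

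The same point covers the equivalent formulation with integers written in unary. There the unary encoding has length polynomial in $n + M$. The pseudo-polynomial algorithm therefore becomes an ordinary polynomial-time algorithm on the unary encoding, and the argument above goes through unchanged.
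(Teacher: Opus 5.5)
Your argument is correct, and it is the classical Garey--Johnson proof that the paper cites without reproducing: a pseudo-polynomial algorithm decides the NP-hard restriction $\Pi_q$ within the polynomial bound $p(n,q(n))$, which forces $\mathrm{P}=\mathrm{NP}$. There are two minor slips. The instance integers you compare against $q(n)$ may have up to $n$ bits rather than $O(\log q(n))$, though the check is still polynomial. In the unary remark, the direction you actually need is that $n$ and $M$ are polynomially bounded by the unary length, not the converse.
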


\subsection{Reference problems}

The reductions in Sections~\ref{sec:capacity} and~\ref{sec:load} use the following classical problems.

\begin{definition}[\textsc{Bin Packing}]\label{def:binpacking}
Given item sizes $s_1,\dots,s_N \in \Zpos$, a bin capacity $\kappa \in \Zpos$, and a bound $B \in \Zpos$, decide whether the items can be partitioned into at most $B$ subsets each of total size at most $\kappa$.
\end{definition}

\begin{proposition}[Hardness of \textsc{Bin Packing} \citep{gj1979}]\label{fact:binpacking}
\textsc{Bin Packing} is strongly NP-hard.
\end{proposition}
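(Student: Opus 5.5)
The plan is to derive the statement from the strong NP-hardness of \textsc{3-Partition}, using a reduction that leaves every integer unchanged, so that strong hardness carries over directly. Recall \textsc{3-Partition}: given $3m$ positive integers $a_1,\dots,a_{3m}$ and a target $T$ with $T/4 < a_i < T/2$ for all $i$ and $\sum_i a_i = mT$, decide whether the indices split into $m$ triples each summing to exactly $T$. I would take as the base fact that this problem is NP-complete in the strong sense \citep{gj1979}, i.e.\ NP-hard even when all numbers are bounded by a polynomial in $m$.

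The reduction maps such an instance to the \textsc{Bin Packing} instance (Definition~\ref{def:binpacking}) with $N = 3m$, $s_i = a_i$, $\kappa = T$ and $B = m$. It is computable in linear time, and the largest integer in the image is $T$, the largest integer of the source. So instances with polynomially bounded numbers map to instances with polynomially bounded numbers, which is exactly what strong hardness needs. Correctness then needs two directions.
\begin{itemize}
\item A $3$-partition is a packing into $m$ bins, each of load exactly $T \le \kappa$.
\item Conversely, take any packing into at most $m$ bins of capacity $T$. The total size $mT$ forces exactly $m$ bins, each filled to exactly $T$. The bounds $T/4 < a_i < T/2$ then forbid two or fewer items from reaching $T$ and four or more from staying within $T$, so every bin holds exactly three items and the bins form a $3$-partition.
\end{itemize}
Combining this reduction with Proposition~\ref{fact:transfer}, applied within the class of instances whose numbers are polynomially bounded, yields strong NP-hardness of \textsc{Bin Packing}.

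The main obstacle is the base fact itself: that \textsc{3-Partition} is strongly NP-hard. This is the genuinely nontrivial step. If it had to be proved rather than cited, I would follow Garey and Johnson: reduce \textsc{3-Dimensional Matching} to \textsc{4-Partition} by encoding each triple and each element as numbers in a base polynomial in the instance size, using padding digits to force every group of four to contain one matching-triple item and one item for each of its three coordinates. I would then pass from \textsc{4-Partition} to \textsc{3-Partition} by a further padding gadget. Throughout, the delicate part is that all constructed numbers stay polynomially bounded. Since the paper states the result with citation to \citep{gj1979}, I would present only the reduction above and cite the base fact.
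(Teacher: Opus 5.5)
Your argument is correct. The number-preserving map $s_i = a_i$, $\kappa = T$, $B = m$ from \textsc{3-Partition}, together with the counting argument ($\sum_i a_i = mT$ forces every bin to be exactly full, and $T/4 < a_i < T/2$ forces exactly three items per bin), is precisely how \citet{gj1979} obtain strong NP-hardness of \textsc{Bin Packing}; the paper gives no proof of its own and simply cites that source, so your route coincides with the cited one.
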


\begin{definition}[\textsc{Set Cover}]\label{def:setcover}
Given a finite universe $U$, a family $F_1,\dots,F_s \subseteq U$ with $\bigcup_i F_i = U$, and an integer $t$, decide whether there is an index set $\mathcal{K}$ with $|\mathcal{K}| \le t$ and $\bigcup_{i \in \mathcal{K}} F_i = U$.
\end{definition}

\begin{proposition}[Hardness of \textsc{Set Cover} \citep{karp1972}]\label{fact:setcover}
\textsc{Set Cover} is NP-hard.
\end{proposition}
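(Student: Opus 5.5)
The statement is Karp's classical result, cited rather than proved in the paper. To prove it I would reduce from \textsc{Vertex Cover}. Its NP-hardness I take as given from the same source \citep{karp1972}. Alternatively, it follows from \textsc{3-Sat} through \textsc{Clique} and \textsc{Independent Set}: a set $S$ is independent in $G$ exactly when $V\setminus S$ is a vertex cover. Once \textsc{Vertex Cover} $\le_p$ \textsc{Set Cover} is established, Proposition~\ref{fact:transfer} gives the claim.

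\textbf{The reduction.} Take an instance $(G=(V,E),k)$ of \textsc{Vertex Cover}, which asks for $C\subseteq V$ with $|C|\le k$ meeting every edge. Build the following \textsc{Set Cover} instance:
\begin{itemize}
  \item the universe is $U=E$;
  \item for each vertex $v\in V$, the set $F_v$ is the set of edges incident to $v$;
  \item the bound is $t=k$.
\end{itemize}
The side condition of Definition~\ref{def:setcover} holds automatically, because every edge has an endpoint and so $\bigcup_v F_v=E$. The map is clearly computable in time $O(|V|+|E|)$.

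\textbf{Correctness.} Identify the index set $\mathcal{K}$ with a vertex set $C$. Then
\[
\bigcup_{v\in C}F_v=E
\]
holds if and only if every edge has an endpoint in $C$, which is exactly the condition that $C$ is a vertex cover. The cardinality bounds coincide since $t=k$. Hence the map sends yes-instances to yes-instances and no-instances to no-instances.

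\textbf{Edge cases.} The only points needing care are degenerate ones. If $E=\emptyset$, the empty family covers the empty universe and the empty set is a vertex cover, so both instances are yes-instances. If $k\ge|V|$, both instances are trivially yes-instances. There is no substantive obstacle beyond citing the NP-hardness of the source problem. If a self-contained chain is wanted, the main work is the textbook \textsc{3-Sat} $\le_p$ \textsc{Vertex Cover} gadget: one edge per variable, one triangle per clause, with bound $k=n+2m$ for $n$ variables and $m$ clauses. I would cite that gadget rather than reprove it.
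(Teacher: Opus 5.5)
Your reduction is correct and is the one the paper relies on: the paper gives no proof of its own and cites \citet{karp1972}. Karp derives \textsc{Set Cover} from \textsc{Node Cover} exactly as you do, with the edges as the universe and one set of incident edges per vertex. This is also the special case the paper invokes in the proof of Theorem~\ref{thm:loadhard}, where every element lies in exactly two sets.
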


\begin{proposition}[Precedence-constrained scheduling \citep{ullman1975,coffman1972,lenstra1978}]\label{fact:scheduling}
The problem $P \mid \mathrm{prec},\, p_j = 1 \mid C_{\max}$, of scheduling unit-length jobs subject to a partial-order precedence constraint on $m$ identical parallel machines so as to minimize the makespan, is NP-hard when the number of machines $m$ is part of the input, and deciding whether the makespan is at most three is itself NP-complete \citep{lenstra1978}; for two machines it is solvable in polynomial time \citep{coffman1972}, while its complexity for each fixed $m \ge 3$ is a long-standing open problem.
\end{proposition}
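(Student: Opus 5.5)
This proposition is classical and the paper only cites it, so the plan is to reconstruct the two nontrivial parts: the makespan-three hardness of \citet{lenstra1978}, and the polynomial algorithm of \citet{coffman1972} for $m=2$. NP-hardness with $m$ part of the input follows at once from the makespan-three statement.

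\emph{Membership in NP.} A schedule assigns each job a slot in $\{1,2,3\}$. Checking that every slot holds at most $m$ jobs and that every precedence pair goes strictly forward takes linear time, so the decision problem lies in NP.

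\emph{Hardness, by reduction from \textsc{Clique}.} Take a graph $G=(V,E)$ with $|V|=n$ and a target $k$. We may assume every vertex has an incident edge, since isolated vertices never help a clique and can be deleted. Put $\ell=k(k-1)/2$ and $e'=|E|-\ell$.
\begin{itemize}
  \item Create one job per vertex and one job per edge, with each vertex job preceding the jobs of its incident edges.
  \item Choose $m > \max(k,\, n-k+\ell,\, e')$.
  \item Add three layers of dummy jobs, of sizes $m-k$, $m-(n-k+\ell)$ and $m-e'$. Every job of each layer precedes every job of the next layer.
\end{itemize}
The total number of jobs is then exactly $3m$, so a makespan-three schedule has no idle slot. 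The chain of dummy layers forces layer $i$ into slot $i$, which leaves exactly $k$, $n-k+\ell$ and $e'$ free places in slots $1,2,3$.

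Slot $1$ can hold only jobs without predecessors, so it contains $k$ vertex jobs. Slot $3$ can hold only jobs without successors, so it contains no vertex job. Hence slot $2$ contains the remaining $n-k$ vertex jobs together with $\ell$ edge jobs. Each of those edge jobs must have both endpoints in slot $1$, so the $k$ vertices of slot $1$ span $\ell=\binom{k}{2}$ edges and therefore form a clique. The converse direction is immediate: place a $k$-clique in slot $1$, the other vertices and the clique edges in slot $2$, and the remaining edges in slot $3$. Since $m$ is polynomial in $|V|+|E|$, the reduction is polynomial, and Proposition~\ref{fact:transfer} gives NP-hardness.

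\emph{Two machines.} I would follow Coffman and Graham.
\begin{enumerate}
  \item Assign lexicographic labels: repeatedly give the next label to a job whose successors are all labelled and whose decreasing sequence of successor labels is lexicographically smallest.
  \item List-schedule by decreasing label.
\end{enumerate}
The main obstacle is the optimality proof for this schedule. I would argue by locating idle slots. One partitions the schedule at the latest idle slot, or at the latest point where a lower-labelled job runs beside a job it does not depend on. The labelling rule then shows that every job after that point is a successor of the whole block before it, which yields a matching lower bound on any two-machine schedule; induction on the blocks completes the argument. The open status of each fixed $m\ge 3$ is a literature fact, not something to be proved.
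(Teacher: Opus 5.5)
\paragraph{How your route compares.} The paper proves nothing here: it quotes the proposition from \citet{ullman1975}, \citet{lenstra1978} and \citet{coffman1972}, so your reconstruction is a different, self-contained route.

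Your hardness half is correct and complete. It is exactly the Lenstra--Rinnooy Kan reduction from \textsc{Clique}, and every step checks out:
\begin{enumerate}[label=(\roman*),nosep]
  \item the job count is $3m$;
  \item the three nonempty dummy layers are forced into slots $1,2,3$;
  \item deleting isolated vertices keeps vertex jobs out of slot $3$;
  \item counting gives $\binom{k}{2}$ edge jobs over the $k$ vertices of slot $1$.
\end{enumerate}
Since $m$ is part of the constructed instance, this one reduction also gives the hardness the paper attributes separately to Ullman.

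\paragraph{The gap in the two-machine half.} The gap sits at exactly the step you flag. Your decomposition lemma is false as phrased. Your boundary test is also vacuous, since two jobs sharing a slot never depend on each other; the boundary has to be located relative to the current critical job.

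What works is the following. Let $u_0$ be the higher-labelled job in the last slot $t_0=C$. Given $u_i$ in slot $t_i$, let $t_{i+1}<t_i$ be the latest slot with an idle machine or a job of label below $\lambda(u_i)$, and set $t_{r+1}=0$ if there is none. Let $u_{i+1}$ be the higher-labelled job of that slot, and put $Z_i=\{u_i\}\cup\{\text{jobs in slots } t_{i+1}+1,\dots,t_i-1\}$.

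These sets must omit the low-labelled partner in each boundary slot. That partner's label is below $\lambda(u_i)=\min\lambda(Z_i)$, and predecessors always carry larger labels. So it precedes no job of $Z_i$, and your claim that later jobs succeed ``the whole block before'' fails for it. Likewise, the partner of $u_0$ in the last slot may have no predecessor at all. After pruning, $|Z_i|=2(t_i-t_{i+1})-1$ is odd.

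List scheduling shows that $u_{i+1}$ precedes every job of $Z_i$. Extending this to every $x\in Z_{i+1}$ is where the lexicographic rule enters. Argue for $i=0,1,\dots$ in turn, and within $Z_{i+1}$ from later slots to earlier ones. There are two cases.
\begin{enumerate}[label=(\alph*),nosep]
  \item $x$ has a successor in $Z_{i+1}$. That successor is already known to precede all of $Z_i$, so $x$ does too.
  \item Otherwise, every successor of $x$ with label at least $\lambda(u_i)$ lies in $Z_i$. Suppose $x$ misses some job of $Z_i$, and let $m^\ast$ be the missed job of largest label. The decreasing label sequences of the successors of $x$ and of $u_{i+1}$ agree above $\lambda(m^\ast)$. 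After that, the sequence of $u_{i+1}$ continues with $\lambda(m^\ast)$, while that of $x$ continues with something smaller or nothing. Hence $N(x)<_{\mathrm{lex}}N(u_{i+1})$ although $\lambda(x)>\lambda(u_{i+1})$. This contradicts the labelling, which forces $N(x)\ge_{\mathrm{lex}}N(y)$ whenever $\lambda(x)>\lambda(y)$.
\end{enumerate}
This reads $N$ over all successors. If you use immediate successors in the transitive reduction, argue on the minimal elements of $Z_i$ instead.

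Only with these pruned odd sets does $\sum_i\lceil |Z_i|/2\rceil=\sum_i(t_i-t_{i+1})=C$ give the matching lower bound. Your sketch neither prunes the blocks nor shows where the lexicographic rule is used, and that is the entire content of the theorem. The alternative is simply to cite it, as the paper does.
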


\section{The model}\label{sec:model}

Throughout, $V$ is a finite set whose elements are called \emph{courses}, and $n \eqdef |V|$.

\begin{definition}[Prerequisite formula]\label{def:prereq}
A \emph{prerequisite formula} over $V$ is a monotone Boolean formula in conjunctive normal form,
\[
  P \;=\; \bigwedge_{j=1}^{k} \Big( \bigvee_{u \in A_j} u \Big), \qquad A_j \subseteq V,\ A_j \neq \emptyset,
\]
with the convention that the empty conjunction ($k=0$) is the constant $\top$. The sets $A_j$ are its \emph{clauses}. Satisfaction $X \models P$ for $X \subseteq V$ is as defined in Section~\ref{sec:prelim}, and by Observation~\ref{fact:clause} it amounts to $A_j \cap X \neq \emptyset$ for every $j$; in particular $X \models \top$ for all $X$.
\end{definition}

By Observation~\ref{fact:monotone} the satisfaction of a prerequisite formula is monotone in the taken set: enlarging the set of completed courses never invalidates a prerequisite already met.

\begin{definition}[Curriculum]\label{def:curriculum}
A \emph{curriculum} is a tuple $C = (V, \credits, P, R, \kappa)$ where
\begin{enumerate}[label=(\roman*),nosep]
  \item $\credits : V \to \Zpos$ assigns a \emph{credit weight} to each course; for $X \subseteq V$ we write $\credits(X) \eqdef \sum_{v \in X} \credits(v)$;
  \item $P$ assigns to each course $v$ a prerequisite formula $P(v)$ over $V$, with clauses $A_{v,1},\dots,A_{v,k_v}$;
  \item $R = \{ (S_1,r_1),\dots,(S_p,r_p) \}$ is a finite set of \emph{requirement groups}, each $(S_g, r_g)$ with $S_g \subseteq V$ and $0 \le r_g \le \credits(S_g)$;
  \item $\kappa \in \Zpos \cup \{\infty\}$ is the \emph{per-term capacity}.
\end{enumerate}
The \emph{support digraph} of $C$ is $D(C) = (V, E)$ with $(u,v) \in E$ iff $u \in \bigcup_{j} A_{v,j}$, that is, iff $u$ occurs in the prerequisite of $v$. We require $D(C)$ to be acyclic, and call $C$ \emph{well-founded}. The \emph{disjunctive width} is $\omega_\vee(C) \eqdef \max_{v,j} |A_{v,j}|$ and the \emph{conjunctive width} is $\omega_\wedge(C) \eqdef \max_v k_v$.
\end{definition}

Mandatory courses and free electives are the two extremes of a requirement group: a group $(\{v\}, \credits(v))$ forces $v$ to be taken, while a group $(S,r)$ with $|S|$ large and $r$ small requests any $r$ credits from a pool.

\begin{definition}[Study plan]\label{def:plan}
A \emph{study plan} (or \emph{schedule}) of length $m$ is a sequence $\pi = (T_1,\dots,T_m)$ of pairwise disjoint subsets of $V$. Write $\taken(\pi) \eqdef \bigcup_{i=1}^{m} T_i$ and, for $1 \le i \le m$, $\pre_i(\pi) \eqdef \bigcup_{\ell < i} T_\ell$. The plan is \emph{feasible} for $C$ if
\begin{enumerate}[label=(F\arabic*),nosep]
  \item $\credits(T_i) \le \kappa$ for every $i$ \hfill(capacity);
  \item for every $i$ and every $v \in T_i$, $\ \pre_i(\pi) \models P(v)$ \hfill(prerequisite).
\end{enumerate}
The plan is \emph{complete} if $\credits\big(\taken(\pi) \cap S_g\big) \ge r_g$ for every $(S_g,r_g) \in R$. The \emph{plan language} of $C$ is
\[
  L(C) \eqdef \{\, \pi : \pi \text{ is feasible and complete for } C \,\},
\]
a language over the alphabet $2^{V}$.
\end{definition}

Since the terms of a plan are pairwise disjoint, at most $n$ of them are nonempty, and empty terms never help either objective below; we therefore assume without loss of generality that plans have no empty term, so that $m \le n$ and $L(C)$ is finite.

\begin{definition}[Objectives]\label{def:obj}
The \emph{time to degree} and the \emph{load} of $C$ are
\begin{align*}
  \ttd(C)  &\eqdef \min\{\, m : (T_1,\dots,T_m) \in L(C)\,\}, \\
  \load(C) &\eqdef \min\{\, \credits(\taken(\pi)) : \pi \in L(C)\,\},
\end{align*}
each equal to $+\infty$ when $L(C) = \emptyset$.
\end{definition}

\begin{definition}[Fragments]\label{def:fragments}
$C$ is \emph{conjunctive} if $\omega_\vee(C) \le 1$, that is, every clause is a single course. $C$ is \emph{mandatory} if $R = \{ (\{v\}, \credits(v)) : v \in M \}$ for some $M \subseteq V$, the \emph{target set}; completeness then means $\taken(\pi) \supseteq M$. $C$ has \emph{free capacity} if $\kappa = \infty$.
\end{definition}

The model deliberately retains only the three commitments a catalog states, and several features of real programs are outside it. Term-offering patterns, that a course is taught only in certain terms, are omitted, so a plan may schedule any course in any term; corequisites, courses that must be taken together, are treated as prerequisites satisfiable in an earlier term or dropped rather than modeled as same-term constraints; and minimum-grade conditions and per-term or per-degree credit floors and ceilings beyond $\kappa$ are not represented. Each of these is a monotone addition to the feasibility relation and would only enlarge the constraint set, so the hardness results below persist under any of them. Course repetition, by contrast, is a relaxation that enlarges the space of feasible plans; since repeating a course never shortens a plan or reduces its load, it too leaves the results intact. The tractability results are stated for the model as defined and would need re-examination were offering or same-term constraints added.

\section{Recognition is easy}\label{sec:recognition}

\begin{proposition}\label{prop:membership}
Given a curriculum $C$ and a plan $\pi$, deciding whether $\pi \in L(C)$ takes time linear in $|\pi| + \sum_v |P(v)| + \sum_g |S_g|$.
\end{proposition}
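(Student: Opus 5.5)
The plan is a single left-to-right sweep over the terms of $\pi$ that maintains the taken set incrementally. We check the three conditions of Definition~\ref{def:plan}: disjointness of the terms, feasibility (F1)--(F2), and completeness against $R$. Each course, each clause occurrence and each requirement-group element should be touched only a constant number of times. We take $|\pi| = \sum_i |T_i|$ plus the number of terms, $|P(v)| = \sum_j |A_{v,j}|$, and arithmetic on credit values as unit cost. Under the binary size measure, the credit sums contribute only linearly many additions of numbers whose lengths are already counted in the input.

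The first step is preprocessing. We allocate an array $\mathrm{term}[\cdot]$ indexed by $V$ and initialised to $\infty$ (meaning ``not taken''). We then scan $T_1,\dots,T_m$. For each $v\in T_i$: if $\mathrm{term}[v]$ is already finite, the terms are not pairwise disjoint and we reject; otherwise we set $\mathrm{term}[v]=i$. In the same pass we accumulate $\credits(T_i)$ and compare it with $\kappa$, which settles (F1). This costs $O(n+|\pi|)$. The $O(n)$ initialisation is absorbed because the $\sum_v |P(v)|$ term ranges over all $v\in V$. Alternatively, we can count $V$ as part of the input or use the standard lazy-initialisation trick.

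The second step checks (F2), which is the only place any care is needed. For $v\in T_i$ we must decide whether $\pre_i(\pi)\models P(v)$. By Observation~\ref{fact:clause}, this holds exactly when every clause $A_{v,j}$ contains some $u$ with $\mathrm{term}[u]<i$. With the array from the first step, each membership test $u\in\pre_i(\pi)$ is a single $O(1)$ comparison, so we never need to materialise the sets $\pre_i(\pi)$. Scanning each clause of $P(v)$ once, for each taken $v$, costs $O(\sum_v|P(v)|)$ in total. I expect this to be the main (mild) obstacle: a naive implementation would rebuild or copy $\pre_i$ for every term, giving quadratic time. Storing the term index rather than a growing set is exactly what avoids this.

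Finally, for completeness we scan each requirement group $S_g$. We sum $\credits(u)$ over those $u\in S_g$ with $\mathrm{term}[u]<\infty$ and compare the sum with $r_g$, at total cost $O(\sum_g|S_g|)$. We accept if and only if all checks pass. Correctness follows directly from Definition~\ref{def:plan} together with Observation~\ref{fact:clause}, and the running time is the sum of the three linear bounds above.
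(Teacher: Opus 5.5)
Your proposal is correct and is essentially the paper's proof. Both make one ordered sweep over the terms: (F1) by summing weights, (F2) by testing each clause for a member in an earlier term via Observation~\ref{fact:clause}, and completeness by summing over each $S_g$. Your $\mathrm{term}[\cdot]$ array is a concrete realisation of the paper's running set $\pre_i(\pi)$, and the explicit disjointness check is a harmless addition that also guarantees each $P(v)$ is scanned at most once. One small slip: with your measure $|P(v)|=\sum_j|A_{v,j}|$, a course with $P(v)=\top$ contributes $0$, so $\sum_v|P(v)|$ does not absorb the $O(n)$ array initialisation in general; rely instead on your lazy-initialisation alternative, or on the encoding of $C$ listing $V$.
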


\begin{proof}
Process the terms in order, maintaining the running set $\pre_i(\pi)$. For (F1) sum the weights in $T_i$. For (F2), for each $v \in T_i$ test each clause $A_{v,j}$ for nonempty intersection with $\pre_i(\pi)$, which by Observation~\ref{fact:clause} is the exact satisfaction test. After the last term, evaluate completeness by summing $\credits(v)$ over $v \in \taken(\pi) \cap S_g$ for each group. Every course, clause, and group membership is inspected a constant number of times.
\end{proof}

The content of the theory is thus in the optimization problems, not in recognition. Both objectives' decision versions therefore lie in \textup{NP}: an optimal plan has at most $n$ nonempty terms, so it is a polynomial-size certificate that Proposition~\ref{prop:membership} verifies in polynomial time, and every hardness result below is accordingly a \emph{completeness} result.

\section{Time to degree without a capacity bound}\label{sec:timefree-sec}

We first treat free capacity. The purpose of this section is to prove that under free capacity the time objective is polynomial in full generality, so that neither disjunction nor electives contribute to its hardness.

\subsection{The conjunctive, mandatory base case}

\begin{definition}[Closure and chain length]\label{def:closure}
Let $C$ be conjunctive. For $X \subseteq V$ let $\cl(X)$ be the least superset of $X$ closed under prerequisites, that is, the least $Y \supseteq X$ such that $u \in Y$ whenever $u$ is a prerequisite of some $v \in Y$; this is well defined by well-foundedness. For $v \in V$ let $\ell(v)$ be the number of vertices on a longest directed path of $D(C)$ ending at $v$, so $\ell(v) = 1$ if $v$ has no prerequisite and $\ell(v) = 1 + \max\{\ell(u) : (u,v) \in E\}$ otherwise.
\end{definition}

\begin{theorem}[Flat-list characterization]\label{thm:flatlist}
Let $C$ be conjunctive, mandatory with target $M$, and have free capacity. Then
\[
  \ttd(C) \;=\; \max_{v \in \cl(M)} \ell(v),
\]
with the convention that the maximum over the empty set is $0$; both the value and an optimal plan are computable in time $O(n + |E|)$.
\end{theorem}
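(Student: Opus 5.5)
We prove the identity by matching a lower bound and an upper bound, both organized around the chain length $\ell$ of Definition~\ref{def:closure}, and then read off the running time from Proposition~\ref{fact:topo}. Write $h \eqdef \max_{v \in \cl(M)} \ell(v)$. Since $C$ is conjunctive, every clause is a singleton, so by Observation~\ref{fact:clause} the condition $\pre_i(\pi) \models P(v)$ says exactly that every in-neighbour of $v$ in $D(C)$ lies in $\pre_i(\pi)$. This reformulation of (F2) drives both directions.

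\emph{Lower bound.} Let $\pi = (T_1,\dots,T_m) \in L(C)$. We first show $\cl(M) \subseteq \taken(\pi)$. Completeness gives $M \subseteq \taken(\pi)$, and (F2) shows that $\taken(\pi)$ is closed under prerequisites: if $v \in T_i$ then every prerequisite of $v$ lies in $\pre_i(\pi) \subseteq \taken(\pi)$. By minimality of $\cl(M)$ the inclusion follows. Next we prove by induction along a longest path that any $v \in T_i$ satisfies $i \ge \ell(v)$. If $\ell(v) = 1$ this is trivial. Otherwise $v$ has a predecessor $u$ with $\ell(u) = \ell(v) - 1$; this $u$ lies in some $T_{i'}$ with $i' < i$, and induction gives $i' \ge \ell(u)$, hence $i \ge \ell(v)$. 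Applying this to a vertex of $\cl(M)$ attaining $h$, which the first claim places in $\taken(\pi)$, yields $m \ge h$. When $M = \emptyset$ the bound $0$ is trivial.

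\emph{Upper bound.} Define $T_i \eqdef \{ v \in \cl(M) : \ell(v) = i \}$ for $1 \le i \le h$. These sets are pairwise disjoint and nonempty, since every value $1,\dots,h$ occurs along a longest chain inside the closed set $\cl(M)$. Capacity (F1) is vacuous because $\kappa = \infty$. For (F2), take $v \in T_i$ and a prerequisite $u$ of $v$. Then $u \in \cl(M)$ by closure, and $\ell(u) \le \ell(v) - 1 < i$, so $u \in \pre_i(\pi)$. Completeness holds because $M \subseteq \cl(M)$. This plan has length $h$, so the two bounds coincide.

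\emph{Algorithm.} Compute $\cl(M)$ by a reverse search from $M$ along arcs of $D(C)$, which takes $O(n + |E|)$ time. Compute all values $\ell(v)$ by the dynamic program of Proposition~\ref{fact:topo} with unit weights, which again takes $O(n+|E|)$ time. Finally, bucket the vertices of $\cl(M)$ by their $\ell$-value. No step is genuinely difficult; the point needing the most care is the lower bound's use of prerequisite-closedness of $\taken(\pi)$. That closedness is exactly what forces the maximum to range over $\cl(M)$ rather than over $M$ alone. It relies on conjunctivity, because with disjunctive clauses $\taken(\pi)$ need not contain all predecessors.
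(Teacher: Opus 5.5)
Your proof is correct and is essentially the paper's argument. The lower bound first forces $\cl(M) \subseteq \taken(\pi)$ and then uses that term indices strictly increase along prerequisite arcs; you write this as an induction giving $i \ge \ell(v)$ for every $v \in T_i$, where the paper follows one longest path with $i_1 < \cdots < i_t$. The upper bound is the same layering $T_i = \{v \in \cl(M) : \ell(v) = i\}$. Your check that every layer is nonempty is a detail the paper leaves implicit.
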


\begin{proof}
\emph{Lower bound.} Every course of $\cl(M)$ must be taken: $M$ is taken by completeness, and if $v$ is taken then, by (F2) and conjunctivity, each of its prerequisites lies in some earlier term and is therefore taken. Now let $v_1, v_2, \dots, v_t$ be a longest path in $D(C)$ ending at a vertex of $\cl(M)$, so $t = \max_{v\in\cl(M)} \ell(v)$ and every $v_s$ lies in $\cl(M)$. If $v_s \in T_{i_s}$ in a feasible plan, then $v_s$ is a prerequisite of $v_{s+1}$, so (F2) forces $i_s < i_{s+1}$. Hence $i_1 < i_2 < \cdots < i_t$ and the plan has at least $t$ terms.

\emph{Upper bound.} Define $T_i \eqdef \{ v \in \cl(M) : \ell(v) = i \}$ for $1 \le i \le t$. These sets are disjoint and cover $\cl(M)$. Free capacity gives (F1). For (F2), if $v \in T_i$ then every prerequisite $u$ of $v$ satisfies $\ell(u) \le i-1$, hence $u \in \pre_i(\pi)$; as $C$ is conjunctive, satisfying every prerequisite in this way makes $\pre_i(\pi) \models P(v)$. The plan takes $\cl(M) \supseteq M$, so it is complete, and it has $t$ terms. The values $\ell(v)$ and the layering are produced by one topological pass, which is linear-time by Proposition~\ref{fact:topo}.
\end{proof}

\subsection{Disjunction does not raise the cost of time}

For general prerequisites the layering must choose, in each clause, which alternative to rely on. The next definition packages that choice into a level function, and the theorem after it shows the choice is made greedily and optimally.

\begin{definition}[AND/OR level]\label{def:level}
For a well-founded curriculum $C$ define $h : V \to \Zpos$ in topological order of $D(C)$ by
\[
  h(v) \;\eqdef\; 1 + \max_{1 \le j \le k_v}\ \min_{u \in A_{v,j}} h(u),
\]
with the convention that the maximum over the empty set of clauses ($k_v = 0$) is $0$, so $h(v) = 1$ when $P(v) = \top$.
\end{definition}

The recurrence is well defined because every $u \in A_{v,j}$ is an in-neighbour of $v$ in $D(C)$ and therefore precedes $v$ in topological order.

\begin{lemma}[Level is a lower bound on completion term]\label{lem:levellower}
Let $\pi$ be any feasible plan for $C$ and let $v \in T_i$. Then $i \ge h(v)$.
\end{lemma}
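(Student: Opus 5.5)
The plan is to argue by strong induction on $h(v)$ or, equivalently, along a topological order of $D(C)$. The claim to establish for every course $v$ is that, whenever $v$ is scheduled in a term $T_i$ of a feasible plan $\pi$, we have $i \ge h(v)$. We fix the feasible plan $\pi=(T_1,\dots,T_m)$ once and for all and prove the statement for all scheduled courses simultaneously.

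\emph{Base case.} If $P(v)=\top$, then $h(v)=1$ by Definition~\ref{def:level}. Every term index satisfies $i\ge 1$, so the claim holds trivially.

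\emph{Inductive step.} Now suppose $k_v\ge 1$ and let $v\in T_i$.
\begin{enumerate}[label=(\alph*),nosep]
  \item Let $j^\ast$ be a clause attaining the maximum in the recurrence, so that $h(v)=1+\min_{u\in A_{v,j^\ast}} h(u)$.
  \item By (F2) and Observation~\ref{fact:clause}, the clause $A_{v,j^\ast}$ meets $\pre_i(\pi)$. So some $u\in A_{v,j^\ast}$ lies in a term $T_{i'}$ with $i'<i$.
  \item This $u$ is an in-neighbour of $v$ in $D(C)$, so it precedes $v$ in topological order. The induction hypothesis therefore applies to $u$ and gives $i'\ge h(u)$.
  \item Combining, $i\ge i'+1\ge h(u)+1\ge 1+\min_{u'\in A_{v,j^\ast}}h(u')=h(v)$.
\end{enumerate}
The induction is legitimate because $D(C)$ is acyclic, which is the well-foundedness requirement of Definition~\ref{def:curriculum}. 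Courses that $\pi$ does not take are simply not covered by the claim, and nothing needs to be said about them.

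\emph{Main subtlety.} The only step needing care is the quantifier order. The plan may satisfy clause $j^\ast$ through an alternative $u$ that is not the minimiser of $h$ on that clause. This does no harm, because any $u$ in the clause has $h(u)\ge\min_{u'\in A_{v,j^\ast}} h(u')$, so the bound only gets stronger. The argument also needs the maximising clause $j^\ast$ to be satisfied at all, and (F2) supplies this since it requires every clause of $P(v)$ to be met. No capacity or completeness condition is used, so the lemma holds for arbitrary feasible plans, with $\kappa$ and $R$ arbitrary.
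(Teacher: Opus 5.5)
Your proof is correct and is essentially the paper's argument: a witness of the clause lies in an earlier term, the induction hypothesis applies to it, and $h(u)\ge\min_{u'\in A_{v,j}}h(u')$ finishes. The paper inducts on the term index $i$ and bounds every clause, whereas you induct along a topological order of $D(C)$ and use only a maximising clause; either measure lets the hypothesis reach the witness $u$.

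One correction to your opening line: strong induction on $h(v)$ is \emph{not} equivalent here. A non-minimising witness can have $h(u)\ge h(v)$, which is precisely your ``main subtlety''. So the induction must be the topological-order one you actually invoke in step (c), or the paper's induction on $i$.
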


\begin{proof}
Induction on $i$. If $P(v) = \top$ then $i \ge 1 = h(v)$. Otherwise fix a clause $j$. By (F2), $A_{v,j} \cap \pre_i(\pi) \neq \emptyset$, so some $u_j \in A_{v,j}$ lies in a term $T_{i_j}$ with $i_j < i$. By the induction hypothesis $i_j \ge h(u_j) \ge \min_{u \in A_{v,j}} h(u)$. As this holds for every clause $j$, $\ i \ge 1 + \max_j i_j \ge 1 + \max_j \min_{u \in A_{v,j}} h(u) = h(v)$.
\end{proof}

\begin{theorem}[Disjunctive, mandatory, free capacity]\label{thm:disjfree}
Let $C$ be a well-founded curriculum with arbitrary prerequisites, mandatory with target $M$, and free capacity. Then $\ttd(C) = \max_{v \in M} h(v)$, with the convention $\max\emptyset = 0$, computable in time $O\big(n + \sum_v |P(v)|\big)$.
\end{theorem}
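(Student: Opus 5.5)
The proof has two halves: a lower bound from Lemma~\ref{lem:levellower} and an upper bound from an explicit level-layered plan, mirroring Theorem~\ref{thm:flatlist}.

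\emph{Lower bound.} Let $\pi \in L(C)$ have $m$ terms. Completeness in the mandatory case means $\taken(\pi) \supseteq M$, so each $v \in M$ lies in some term $T_i$ with $i \le m$. Lemma~\ref{lem:levellower} then gives $m \ge i \ge h(v)$. Taking the maximum over $v \in M$ yields $\ttd(C) \ge \max_{v\in M} h(v)$. This also covers $M = \emptyset$, where the value is $0$ and the empty plan is complete.

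\emph{Upper bound.} Unlike the conjunctive case, taking the full closure of $M$ is wrong, since it could include alternatives of high level. Instead, I define a \emph{witness set} $W \subseteq V$ top-down. Start with $W \supseteq M$. Whenever $v \in W$ and $j \le k_v$, add to $W$ a chosen $u_{v,j} \in \arg\min_{u \in A_{v,j}} h(u)$. This terminates because $D(C)$ is acyclic, and processing vertices in reverse topological order computes $W$ in one pass.

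Then set $T_i \eqdef \{v \in W : h(v) = i\}$ for $1 \le i \le t \eqdef \max_{v\in M} h(v)$. The key claim is that every $v \in W$ has $h(v) \le t$. I prove it by noting that for each chosen witness, $h(u_{v,j}) = \min_{u\in A_{v,j}} h(u) \le h(v) - 1 < h(v)$ by the recurrence of Definition~\ref{def:level}. So levels strictly decrease along witness edges, and every element of $W$ has level at most that of some element of $M$.

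For feasibility, (F1) holds by free capacity. For (F2), take $v \in T_i$ and a clause $j$. The witness $u_{v,j}$ lies in $W$ with $h(u_{v,j}) < i$, so it sits in an earlier term; hence $A_{v,j} \cap \pre_i(\pi) \ne \emptyset$, and Observation~\ref{fact:clause} gives $\pre_i(\pi) \models P(v)$. For completeness, $M \subseteq W$. The plan has $t$ terms after discarding any empty ones, which only lowers the count, and the lower bound forces equality.

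\emph{Running time.} Computing $h$ is a single topological pass in which each clause's minimum costs $O(|A_{v,j}|)$, for $O(n + \sum_v |P(v)|)$ in total, as in Proposition~\ref{fact:topo}. Building $W$ and bucketing by level cost the same.

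The main obstacle is the upper bound, specifically ensuring that the selective witness set never pulls in a course whose level exceeds $t$. That is exactly where the choice of a minimum-level alternative, together with the strict decrease $h(u_{v,j}) < h(v)$, is essential.
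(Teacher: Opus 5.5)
Your proof is correct and follows the paper's argument: the lower bound applies Lemma~\ref{lem:levellower} to each target course, and the upper bound layers by $h$ the least witness-closed set containing $M$, using minimum-level witnesses. Your explicit check that $h$ strictly decreases along witness edges, so that the closure never contains a course above level $\max_{v\in M} h(v)$, makes precise a step the paper leaves implicit.
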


\begin{proof}
\emph{Lower bound.} Each $v \in M$ is taken (completeness), say in term $i_v$; Lemma~\ref{lem:levellower} gives $i_v \ge h(v)$, so the plan has at least $\max_{v \in M} h(v)$ terms.

\emph{Upper bound.} For each $v$ with $k_v \ge 1$ and each clause $j$ pick a witness $u^\ast_{v,j} \in \arg\min_{u \in A_{v,j}} h(u)$. Let $Y$ be the least set containing $M$ and closed under the rule ``if $v \in Y$ and $k_v \ge 1$ then $u^\ast_{v,j} \in Y$ for all $j$''. Schedule $T_i \eqdef \{ v \in Y : h(v) = i \}$ for $1 \le i \le \max_{v\in M} h(v)$. Free capacity gives (F1). For (F2): if $v \in T_i$ and $k_v \ge 1$, then for each clause $j$ the witness $u^\ast_{v,j}$ has $h(u^\ast_{v,j}) = \min_{u\in A_{v,j}} h(u) \le h(v) - 1 = i-1$, so $u^\ast_{v,j} \in \pre_i(\pi)$ and clause $j$ is satisfied; thus $\pre_i(\pi) \models P(v)$. The schedule takes $Y \supseteq M$, so it is complete, and its length is $\max_{v \in M} h(v)$. The values $h(v)$ are computed by one topological pass, each clause inspected once.
\end{proof}

\begin{corollary}\label{cor:widthirrelevant}
The time to degree of a mandatory, free-capacity curriculum is polynomial-time computable for every value of the disjunctive width $\omega_\vee(C)$; disjunctive width does not affect the tractability of the time objective.
\end{corollary}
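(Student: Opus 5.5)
The corollary follows almost at once from Theorem~\ref{thm:disjfree}, so the plan is to make explicit that its running time carries no dependence on $\omega_\vee(C)$ beyond the input size. First, I will note that a curriculum with arbitrary disjunctive width is still a well-founded, mandatory, free-capacity curriculum in the sense of Definition~\ref{def:curriculum}. Theorem~\ref{thm:disjfree} therefore applies verbatim and gives $\ttd(C) = \max_{v \in M} h(v)$.

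Second, I will bound the cost of the computation. The values $h(v)$ of Definition~\ref{def:level} are produced by one pass in a topological order of $D(C)$, which exists and is computable in time $O(n+|E|)$ by Proposition~\ref{fact:topo}. In that pass each clause $A_{v,j}$ is scanned once to take its minimum, and each vertex takes a maximum over its $k_v$ clauses. The total work is therefore $O\big(n + \sum_v |P(v)|\big)$, where $\sum_v |P(v)| = \sum_{v,j} |A_{v,j}|$. This quantity is at most the length of the encoding of $C$, whatever the value of $\omega_\vee(C)$. The arc count satisfies $|E| \le \sum_v |P(v)|$ because every arc $(u,v)$ comes from an occurrence of $u$ in some clause of $P(v)$. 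The final maximum over $M$ costs $O(n)$.

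The conclusion is an algorithm linear in the input size for every disjunctive width. The width enters only through the size of the input it already contributes, and never as a parameter in an exponent or as a branching factor.

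The only point needing care is exactly that last one: a large width does not trigger any combinatorial choice. The proof of Theorem~\ref{thm:disjfree} resolves each clause greedily by picking an $\arg\min$ witness rather than enumerating alternatives, and its lower bound (Lemma~\ref{lem:levellower}) holds uniformly. So the choice of witness is local per clause and costs $|A_{v,j}|$ time, which finishes the argument.
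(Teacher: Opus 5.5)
Your proposal is correct and follows the paper's route: the paper gives no separate proof, treating the corollary as immediate from Theorem~\ref{thm:disjfree}. Your argument reads it off the same way, via the $O(n + \sum_v |P(v)|)$ bound, which is linear in the input size for every value of $\omega_\vee(C)$ because each clause is resolved by a greedy $\arg\min$ witness rather than by enumerating alternatives.
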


\subsection{Electives do not raise the cost of time}

We now remove the mandatory restriction and allow arbitrary requirement groups, still under free capacity.

\begin{theorem}[Time is polynomial without a capacity bound]\label{thm:timefree}
Let $C$ be a well-founded curriculum with arbitrary prerequisites, arbitrary requirement groups $R$, and free capacity. Then
\[
  \ttd(C) \;=\; \min\Big\{\, B \in \Znn \ :\ \credits\big(\{ v \in S_g : h(v) \le B \}\big) \ge r_g \ \text{ for every } (S_g,r_g) \in R \,\Big\},
\]
with the minimum over the empty set understood as $+\infty$. Consequently $\ttd(C)$ is computable in polynomial time.
\end{theorem}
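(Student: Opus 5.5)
The proof will show two inequalities, reusing the level function $h$ of Definition~\ref{def:level} and Lemma~\ref{lem:levellower} as in Theorem~\ref{thm:disjfree}. Write $B^\ast$ for the minimum on the right-hand side. Since $h(v)\le n$ for all $v$, the predicate ``$\credits(\{v\in S_g : h(v)\le B\})\ge r_g$ for all $g$'' is monotone in $B$ and stabilises at $B=n$. So $B^\ast$ is either at most $n$ or $+\infty$, and it is found by computing $h$ in one topological pass and testing $B=0,1,\dots,n$. Each test costs $O(n+\sum_g|S_g|)$, which gives polynomial time.

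\emph{Lower bound} ($\ttd(C)\ge B^\ast$). Take any $\pi=(T_1,\dots,T_m)\in L(C)$. By Lemma~\ref{lem:levellower}, every $v\in\taken(\pi)$ lies in a term $i\ge h(v)$, and $i\le m$, so $h(v)\le m$. Hence $\taken(\pi)\cap S_g\subseteq\{v\in S_g : h(v)\le m\}$. Completeness then gives $\credits(\{v\in S_g: h(v)\le m\})\ge r_g$ for every $g$, so $m$ is admissible and $m\ge B^\ast$. If $L(C)=\emptyset$, both sides will be $+\infty$ once the upper bound is in place.

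\emph{Upper bound} ($\ttd(C)\le B^\ast$ when $B^\ast<\infty$). Set $M\eqdef\{v\in\bigcup_g S_g : h(v)\le B^\ast\}$ and apply the construction in the proof of Theorem~\ref{thm:disjfree} to the target $M$. Choose witnesses $u^\ast_{v,j}$ minimising $h$ in each clause, close $M$ under witnesses to obtain $Y$, and schedule $T_i=\{v\in Y: h(v)=i\}$. Witnesses have strictly smaller level, so every $v\in Y$ satisfies $h(v)\le\max_{M}h\le B^\ast$. The plan therefore has at most $B^\ast$ terms, (F1) holds because capacity is free, and (F2) holds exactly as before. The plan takes $Y\supseteq M$, so it contains $\{v\in S_g: h(v)\le B^\ast\}$ for every $g$, and by the choice of $B^\ast$ it is complete. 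Some levels $1,\dots,B^\ast$ may be empty; removing empty terms only shortens the plan and keeps it feasible, because $\pre_i$ does not change. So $\ttd(C)\le B^\ast$.

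The one point needing care is that taking extra courses can never hurt: completeness is monotone in $\taken(\pi)$, and with $\kappa=\infty$ there is no cost to greedily including every eligible pool course. Once that is noted, the two-sided argument closes. The edge case $R=\emptyset$ gives $B^\ast=0$ together with the empty plan, consistent with the convention.
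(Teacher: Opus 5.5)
Your proof is correct and follows the paper's own argument. The lower bound via Lemma~\ref{lem:levellower} is identical, and your upper bound differs only cosmetically: you schedule the witness-closure of the pool courses of level at most $B^\ast$, whereas the paper schedules all courses of level at most $B^\ast$, a set it notes is already closed under witnesses.
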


\begin{proof}
Write $V_{\le B} \eqdef \{ v \in V : h(v) \le B \}$. Note first that $V_{\le B}$ is closed under the witness rule of Theorem~\ref{thm:disjfree}: if $h(v) \le B$ then each witness $u^\ast_{v,j}$ has $h(u^\ast_{v,j}) < h(v) \le B$.

\emph{Every complete plan of length $B$ uses only $V_{\le B}$.} If $\pi \in L(C)$ has length $B$, then every taken course lies in some term $i \le B$, and by Lemma~\ref{lem:levellower} has $h(v) \le i \le B$; hence $\taken(\pi) \subseteq V_{\le B}$. Completeness then requires $\credits(\taken(\pi) \cap S_g) \ge r_g$, and since $\taken(\pi) \cap S_g \subseteq \{ v \in S_g : h(v) \le B\}$, the displayed inequality holds for that $B$. Thus $\ttd(C)$ is at least the stated minimum.

\emph{The threshold $B$ is achievable.} Let $B$ satisfy the displayed condition. Schedule $T_i \eqdef \{ v \in V_{\le B} : h(v) = i\}$ for $1 \le i \le B$. Exactly as in the proof of Theorem~\ref{thm:disjfree}, the closure property of $V_{\le B}$ under witnesses makes this schedule satisfy (F1) and (F2); it takes all of $V_{\le B}$, and by the displayed condition $\credits(V_{\le B} \cap S_g) \ge r_g$ for every group, so it is complete. Its length is $B$. Hence $\ttd(C) \le B$ for every feasible $B$, and equality with the minimum follows.

The values $h(v)$ are computed once; the candidate thresholds $B$ are the at most $n$ distinct values of $h$, and each is tested in linear time.
\end{proof}

\begin{remark}
Theorem~\ref{thm:timefree} is the strongest tractability statement of the paper: with no capacity bound, the time objective is polynomial for \emph{arbitrary} disjunctive prerequisites and \emph{arbitrary} credit-threshold electives. Whatever makes time to degree hard, it is not the logical structure of prerequisites and not the combinatorics of electives.
\end{remark}

\section{Time to degree with a capacity bound}\label{sec:capacity}

The single remaining ingredient, the per-term capacity, is by itself sufficient for intractability.

\begin{problem}[\textsc{Time-to-Degree}]\label{prob:ttd}
Given a curriculum $C$ and an integer $B$, decide whether $\ttd(C) \le B$.
\end{problem}

\begin{theorem}[Capacity makes time hard]\label{thm:capacityhard}
\textsc{Time-to-Degree} is strongly NP-hard, already for instances that are conjunctive and mandatory and in which every course has the empty prerequisite $\top$.
\end{theorem}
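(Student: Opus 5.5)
The plan is a pseudo-polynomial reduction from \textsc{Bin Packing}, which is strongly NP-hard by Proposition~\ref{fact:binpacking}. Given item sizes $s_1,\dots,s_N$, a bin capacity $\kappa$ and a bound $B$, construct the curriculum $C$ with courses $V=\{v_1,\dots,v_N\}$ and credit weights $\credits(v_q)=s_q$. Every course has prerequisite $P(v_q)=\top$. The per-term capacity is $\kappa$, and the requirement set is mandatory with target $M=V$, namely $R=\{(\{v_q\},s_q): 1\le q\le N\}$. The decision instance is $(C,B)$. This curriculum is conjunctive, since it has no clauses at all. It is mandatory, and well-founded because its support digraph has no arcs. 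The construction takes linear time, and every integer in it is an integer of the \textsc{Bin Packing} instance. The reduction therefore preserves unary encoding, which is what strong NP-hardness requires.

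Next, check correctness in both directions. Suppose the items admit a partition into at most $B$ bins $I_1,\dots,I_{B'}$, each of total size at most $\kappa$. Discard any empty bins and set $T_i=\{v_q : q\in I_i\}$. Condition (F1) is exactly the bin capacity. Condition (F2) holds trivially because every prerequisite is $\top$. Completeness holds because every item lies in some bin. Hence $\ttd(C)\le B'\le B$.

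Conversely, suppose $\pi=(T_1,\dots,T_m)\in L(C)$ with $m\le B$. Completeness forces $\taken(\pi)=V$. The terms are pairwise disjoint, so they partition $V$. By (F1) each term has weight at most $\kappa$, so the terms give a packing of the items into $m\le B$ bins.

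Finally, invoke Proposition~\ref{fact:transfer} in its strong form. A unary \textsc{Bin Packing} instance maps to a \textsc{Time-to-Degree} instance whose numbers are polynomially bounded. So \textsc{Time-to-Degree} is NP-hard even when its integers are bounded by a polynomial, which is strong NP-hardness.

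Membership in NP was already noted after Proposition~\ref{prop:membership}, so the problem is in fact strongly NP-complete. There is no real obstacle in this argument. The one point to state carefully is that the reduction must not introduce large numbers. The bound $B$ is at most $N$ without loss of generality, and the weights and capacity are copied verbatim, so this holds.

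An alternative would be a reduction from Proposition~\ref{fact:scheduling}. That route would need prerequisites and does not yield the no-prerequisite form of the statement, so \textsc{Bin Packing} is the right source.
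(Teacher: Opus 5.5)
Your proposal is correct and follows the paper's proof essentially verbatim: the same \textsc{Bin Packing} reduction, with one prerequisite-free course per item, weights equal to item sizes, capacity $\kappa$ and mandatory target $M=V$, so that complete feasible plans of length at most $B$ correspond exactly to packings into at most $B$ bins. Discarding empty bins and capping $B$ at $N$ to keep every number polynomially bounded are small refinements that only make the same argument more careful.
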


\begin{proof}
Reduction from \textsc{Bin Packing} (Definition~\ref{def:binpacking}), which is strongly NP-hard by Proposition~\ref{fact:binpacking}. Given item sizes $s_1,\dots,s_N \in \Zpos$, a bin capacity $\kappa$, and a bound $B$, build $C$ with a course $v_i$ of weight $\credits(v_i) = s_i$ for each item, all prerequisites equal to $\top$, capacity $\kappa$, and mandatory target $M = V$ (equivalently $R = \{(\{v_i\}, s_i)\}_i$). A plan is any assignment of the courses to terms with (F1) $\credits(T_i) \le \kappa$; prerequisites are vacuous, and completeness forces all courses to be taken. Thus feasible complete plans of length $B$ correspond exactly to packings of the items into $B$ bins of capacity $\kappa$, and $\ttd(C) \le B$ iff the packing instance is a yes-instance. The construction is polynomial and preserves the numerical values, so by Propositions~\ref{fact:transfer} and~\ref{fact:binpacking} the strong NP-hardness transfers.
\end{proof}

\begin{remark}[A sharper reading through scheduling]\label{rem:scheduling}
When all credits are equal to a common value $\bar{w}$, so that (F1) caps the \emph{number} of courses per term at $c \eqdef \lfloor \kappa / \bar{w} \rfloor$, \textsc{Time-to-Degree} on a conjunctive mandatory instance is exactly the problem of scheduling unit-length jobs with precedence constraints on $c$ identical machines to minimize the makespan, the classical $P \mid \mathrm{prec},\, p_j = 1 \mid C_{\max}$. By Proposition~\ref{fact:scheduling} that problem is NP-hard when $c$ is part of the input, so the capacity bound is intractable even in the unit-credit case, and its status for each fixed $c \ge 3$ is a long-standing open question that the curricular formulation inherits verbatim.
\end{remark}

\begin{remark}[The hardness in realistic parameter ranges]\label{rem:realistic}
Theorem~\ref{thm:capacityhard} draws its hardness from instances with unbounded credit weights, whereas an actual catalog is milder on both counts: credit values are small integers and the per-term capacity is a fixed ceiling of a few courses. In that regime the prerequisite-free case is polynomial, since \textsc{Bin Packing} with a constant number of distinct item sizes and constant capacity is solvable in polynomial time, and the unit-credit case with prerequisites is exactly $P \mid \mathrm{prec},\, p_j = 1 \mid C_{\max}$ with a \emph{fixed} number of machines, whose complexity for each fixed $c \ge 3$ is the long-standing open problem of Proposition~\ref{fact:scheduling}. The strong NP-hardness of Theorem~\ref{thm:capacityhard} is therefore a statement about the model in full generality; for the parameter ranges a real curriculum occupies, the exact complexity of the time objective is open rather than settled. This does not weaken the paper's organizing claim, which is one of \emph{location} rather than of worst-case intractability: capacity, and capacity alone, is the commitment on which the tractability of the time objective turns, while disjunction and electives leave it untouched.
\end{remark}

\section{Load, and the duality of hardness}\label{sec:load}

We turn to the load objective and show that its hardness has a different and complementary source.

\begin{problem}[\textsc{Min-Load}]\label{prob:load}
Given a curriculum $C$ and an integer $L$, decide whether $\load(C) \le L$.
\end{problem}

\begin{proposition}[Conjunctive load is easy]\label{prop:loadconj}
If $C$ is conjunctive and mandatory with target $M$, and $\kappa \ge \max_{v \in \cl(M)} \credits(v)$ (in particular if $\kappa = \infty$), then $\load(C) = \credits(\cl(M))$, computable in time $O(n + |E|)$. For smaller $\kappa$ some course of $\cl(M)$ cannot be scheduled and $\load(C) = +\infty$.
\end{proposition}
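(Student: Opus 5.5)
The plan is to prove matching lower and upper bounds on the load, reusing the two halves of the proof of Theorem~\ref{thm:flatlist} but with credit weight in place of term count. For the \emph{lower bound}, I will take any $\pi \in L(C)$ and argue exactly as in that proof that $\cl(M) \subseteq \taken(\pi)$. Completeness puts $M$ into $\taken(\pi)$. If $v$ is taken in term $T_i$, then (F2) together with conjunctivity (every clause is a singleton $\{u\}$) forces each prerequisite $u$ of $v$ into $\pre_i(\pi) \subseteq \taken(\pi)$. Since $\cl(M)$ is the least superset of $M$ closed under prerequisites and $\taken(\pi)$ is such a superset, the inclusion follows. Weights are positive, so $\credits(\taken(\pi)) \ge \credits(\cl(M))$ for every plan in $L(C)$.

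For the \emph{upper bound}, I need a feasible complete plan that takes exactly $\cl(M)$ and respects the capacity $\kappa \ge \max_{v\in\cl(M)}\credits(v)$. I will not use the layered plan of Theorem~\ref{thm:flatlist}, since a layer may exceed $\kappa$. Instead I will fix a topological order $v_1,\dots,v_q$ of $\cl(M)$, using Proposition~\ref{fact:topo} on the induced subgraph of $D(C)$, and schedule one course per term, $T_i = \{v_i\}$. Then (F1) holds because $\credits(v_i)\le\kappa$. For (F2), each prerequisite $u$ of $v_i$ lies in $\cl(M)$ by closure and precedes $v_i$ in the order, so $u \in \pre_i(\pi)$, and by Observation~\ref{fact:clause} we get $\pre_i(\pi)\models P(v_i)$. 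The plan takes $\cl(M)\supseteq M$, so it is complete, and its load is $\credits(\cl(M))$.

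For the running time, I will compute $\cl(M)$ by a reverse graph search from $M$ along the in-arcs of $D(C)$, then sum the weights and produce the topological order, all in $O(n+|E|)$ time.

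The only step needing care is the final clause, since its claim is not literally correct as stated. If $\kappa < \max_{v\in\cl(M)}\credits(v)$, pick a course $v\in\cl(M)$ with $\credits(v)>\kappa$. By the lower-bound argument every complete plan must take $v$, and the term containing $v$ then violates (F1). Hence $L(C)=\emptyset$ and $\load(C)=+\infty$. What this argument gives is that \emph{some} course of $\cl(M)$ cannot be placed in any term, which is the intended reading. So the main obstacle is really just noticing that the one-course-per-term schedule, rather than the layered one, is what makes the capacity hypothesis both sufficient and necessary.
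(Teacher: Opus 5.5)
Your proof is correct and follows the paper's own argument: the closure lower bound from Theorem~\ref{thm:flatlist}, a one-course-per-term schedule in topological order for the upper bound, and infeasibility whenever some course of $\cl(M)$ has weight exceeding $\kappa$. Your remark that the final clause is ``not literally correct as stated'' is unwarranted, since your own argument proves that clause exactly as stated.
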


\begin{proof}
By the lower-bound argument of Theorem~\ref{thm:flatlist}, every feasible complete plan takes at least $\cl(M)$, so $\load(C) \ge \credits(\cl(M))$. Conversely $\cl(M)$ can be scheduled: place its courses in prerequisite order, one nonempty term at a time if necessary, which respects (F1) for any $\kappa \ge \max_v \credits(v)$ and (F2) by construction; if some single course exceeds $\kappa$ the instance is infeasible and $\load = +\infty$ consistently. Hence $\load(C) = \credits(\cl(M))$. The closure is computed by one graph traversal.
\end{proof}

\begin{theorem}[Disjunction makes load hard]\label{thm:loadhard}
\textsc{Min-Load} is NP-hard, already for instances that are mandatory, have free capacity, and have disjunctive width two.
\end{theorem}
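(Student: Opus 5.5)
The plan is a reduction from \textsc{Vertex Cover}, the special case of \textsc{Set Cover} (Definition~\ref{def:setcover}) in which every element of the universe lies in exactly two of the sets. It is NP-hard by \citet{karp1972}, and this is the same source as Proposition~\ref{fact:setcover}. I choose this special case because each element has exactly two covering sets, so the disjunction it induces has exactly two literals. That is precisely disjunctive width two, and no gadget is needed to simulate wider clauses.

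\emph{Construction.} Given a graph $G=(U,F)$ and a bound $t$, build $C$ as follows. Each vertex $a \in U$ becomes a course $a$ with $\credits(a)=1$ and $P(a)=\top$. Each edge $e=\{a,b\} \in F$ becomes a course $e$ with $\credits(e)=1$ and $P(e) = (a \vee b)$. Set $\kappa=\infty$, and make the curriculum mandatory with target $M = F$, the set of edge courses. Put $L = |F| + t$. The support digraph has arcs only from vertex courses to edge courses, so it is acyclic and $C$ is well-founded. Every clause has exactly two literals, so $\omega_\vee(C) = 2$. The construction is clearly polynomial.

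\emph{Correctness.} ($\Rightarrow$) Suppose $K \subseteq U$ is a vertex cover with $|K| \le t$. Consider the two-term plan $(K, F)$. Capacity (F1) holds trivially because $\kappa = \infty$. For (F2), each edge course $e=\{a,b\}$ needs $\{a,b\} \cap K \neq \emptyset$, which holds because $K$ is a cover. The plan takes all of $M$, so it is complete, and its load is $|K| + |F| \le L$.

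($\Leftarrow$) Let $\pi \in L(C)$ have load at most $L$. Completeness forces $F \subseteq \taken(\pi)$. For each $e = \{a,b\}$, condition (F2) and Observation~\ref{fact:clause} give $\{a,b\} \cap \pre_i(\pi) \neq \emptyset$, so $K \eqdef \taken(\pi) \cap U$ meets every edge and is a vertex cover. All weights are $1$, so $|K| = \credits(\taken(\pi)) - |F| \le t$. Thus $\load(C) \le L$ holds iff $G$ has a vertex cover of size at most $t$, and Proposition~\ref{fact:transfer} transfers NP-hardness.

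\emph{Main obstacle.} The only delicate point is meeting the width-two restriction. General \textsc{Set Cover} would give clauses of unbounded width, and splitting them into binary OR-trees introduces auxiliary courses of positive weight. Those auxiliary weights distort the load count unless the set courses are made heavy, which yields only weak hardness. Starting from the two-occurrence special case avoids this problem entirely.

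I would also remark that all numbers in the construction are $1$, so the hardness is in fact strong.
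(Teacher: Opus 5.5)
Your reduction is correct and is essentially the paper's. The paper reduces from general \textsc{Set Cover}, using prerequisite-free set courses of weight $1$ and mandatory element courses of weight $q+1$ whose single clause lists the sets containing the element, and only then specializes to \textsc{Vertex Cover} to reach width two; you start from \textsc{Vertex Cover} with unit weights throughout, which changes nothing of substance. One aside in your ``main obstacle'' paragraph is wrong, though it does not affect your proof: an OR-tree gadget with heavy set courses would still give strong hardness, because a set-course weight just exceeding the total number of auxiliary courses is polynomially bounded.
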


\begin{proof}
Reduction from \textsc{Set Cover} (Definition~\ref{def:setcover}), which is NP-hard by Proposition~\ref{fact:setcover}. Given a universe $U = \{e_1,\dots,e_q\}$, a family $F_1,\dots,F_s \subseteq U$, and an integer $t$, construct $C$ as follows. For each set $F_i$ introduce a course $x_i$ with $\credits(x_i) = 1$ and $P(x_i) = \top$. For each element $e \in U$ introduce a course $y_e$ with $\credits(y_e) = q+1$ and the single disjunctive clause
\[
  P(y_e) \;=\; \bigvee_{i \,:\, e \in F_i} x_i .
\]
Let the target set be $M = \{ y_e : e \in U\}$ (all element courses mandatory), and set $\kappa = \infty$. Any feasible complete plan takes every $y_e$, contributing fixed load $q(q+1)$, and for each $e$ must take at least one $x_i$ with $e \in F_i$ to satisfy $P(y_e)$; the set of taken $x_i$ is therefore a cover of $U$, of load equal to its cardinality. Conversely any cover yields such a plan under free capacity. Hence
\[
  \load(C) \;=\; q(q+1) + \min\{\, |\mathcal{K}| : \mathcal{K} \text{ covers } U \,\},
\]
and $\load(C) \le q(q+1) + t$ iff the \textsc{Set Cover} instance has a cover of size $t$. The reduction is polynomial. Finally, restricting to \textsc{Vertex Cover}, the special case in which every element lies in exactly two sets and which remains NP-hard \citep{karp1972}, makes every clause $P(y_e)$ a disjunction of exactly two courses, so the hardness already holds at disjunctive width two; bounded disjunctive width therefore does not make the load objective tractable.
\end{proof}

The disjunction reduction uses only polynomially bounded weights and so yields \emph{strong} NP-hardness. Load has a second and independent source of hardness, the elective choice, which is hard even with no disjunction at all.

\begin{theorem}[Electives make load hard]\label{thm:loadelective}
\textsc{Min-Load} is NP-hard, already for instances that are conjunctive, prerequisite-free, and have free capacity and a single requirement group.
\end{theorem}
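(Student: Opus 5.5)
Since the instances have no prerequisites and a single requirement group $(S,r)$, the load problem reduces to choosing a subset of $S$ with weight at least $r$ and minimum total weight. That is a subset-sum type problem, and weak NP-hardness is consistent with the statement, which says NP-hard rather than strongly NP-hard.

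I will reduce from \textsc{Subset Sum}: given positive integers $a_1,\dots,a_N$ and a target $T$, decide whether some subset sums to exactly $T$. It is classically NP-complete \citep{karp1972,gj1979}. Because it is not among the reference problems of Section~\ref{sec:prelim}, I will invoke it by citation. Alternatively, Theorem~\ref{thm:capacityhard} could be adapted, but \textsc{Subset Sum} is the natural source.

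The construction is as follows. For each $i$ introduce a course $v_i$ with $\credits(v_i)=a_i$ and $P(v_i)=\top$. Set $\kappa=\infty$ and $R=\{(S,r)\}$ with $S=V$ and $r=T$. We may assume $T\le \sum_i a_i$; otherwise the instance is trivially a no-instance, and we output any fixed no-instance. This ensures $0\le r\le\credits(S)$ as Definition~\ref{def:curriculum} requires. Take the load bound to be $L=T$.

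Correctness rests on the following claim: under free capacity and vacuous prerequisites, every subset $X\subseteq V$ is the taken set of a feasible plan, namely the one-term plan $(X)$. Hence
\[
\load(C)=\min\{\credits(X): X\subseteq V,\ \credits(X)\ge T\}.
\]
From this, $\load(C)\le T$ holds iff some $X$ satisfies $T\le\credits(X)\le T$, that is, iff a subset sums exactly to $T$.

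Two details need care. The empty set is excluded from the minimum unless $T=0$, and in that case both sides trivially agree. The plan with $X=\emptyset$ is the empty plan, which is allowed, since removing empty terms is only a normalization.

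The reduction is polynomial when the integers are written in binary. There is no real obstacle in this proof; the only point to flag is that the hardness is weak rather than strong. Indeed, this case admits a pseudo-polynomial knapsack dynamic program, so strong hardness should not be claimed. I would add a remark that strong hardness for electives needs overlapping groups, via a set-cover style reduction with several groups.
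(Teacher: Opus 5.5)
Your proposal is correct and is essentially the paper's proof: the same reduction from \textsc{Subset Sum}, with courses of weight $a_i$, vacuous prerequisites, free capacity, the single group $(V,T)$ and load bound $T$, resting on the identity $\load(C)=\min\{\credits(X): X\subseteq V,\ \credits(X)\ge T\}$. Two of your additions go slightly beyond the paper: discarding instances with $T>\sum_i a_i$, so that the requirement $r_g \le \credits(S_g)$ of Definition~\ref{def:curriculum} holds, fills in a detail the paper leaves implicit, and your weak-versus-strong caveat matches the paper's own comment after the theorem.
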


\begin{proof}
Reduction from \textsc{Subset Sum}, which is NP-complete \citep{gj1979}: given positive integers $a_1,\dots,a_N$ and a target $b$, decide whether some subset sums to exactly $b$. Build $C$ with courses $z_1,\dots,z_N$ of weights $\credits(z_i) = a_i$, all prerequisites $\top$, the single requirement group $(Z, b)$ with $Z = \{z_1,\dots,z_N\}$, and $\kappa = \infty$. Because prerequisites are vacuous, any subset of $Z$ is the taken set of a feasible plan under free capacity, and such a plan is complete iff its taken set weighs at least $b$; hence $\load(C) = \min\{ \credits(X) : X \subseteq Z,\ \credits(X) \ge b\}$. As the weights are integers, $\load(C) \le b$ holds iff some subset weighs exactly $b$, that is iff the \textsc{Subset Sum} instance is a yes-instance. The reduction is polynomial.
\end{proof}

The route through \textsc{Subset Sum} is only weakly NP-hard: the prerequisite-free single-group fragment is solvable in pseudo-polynomial time by the standard subset-sum dynamic program. Elective choice is nevertheless a source of \emph{strong} hardness as soon as a course may count toward more than one requirement, which is the norm in real catalogs, where a single course often satisfies several overlapping requirement groups.

\begin{theorem}[Overlapping electives make load strongly hard]\label{thm:loadover}
\textsc{Min-Load} is strongly NP-hard, already for instances that are conjunctive, prerequisite-free, have free capacity, and use only unit credits, once requirement groups are permitted to overlap.
\end{theorem}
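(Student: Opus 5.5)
I will reduce from \textsc{Set Cover} again, this time encoding the family as courses and the universe as overlapping requirement groups. This mirrors the proof of Theorem~\ref{thm:loadhard}, but the disjunctive clauses are replaced by covering constraints. Given $U=\{e_1,\dots,e_q\}$, a family $F_1,\dots,F_s$ with $\bigcup_i F_i=U$, and a bound $t$, build $C$ as follows:
\begin{itemize}[nosep]
  \item one course $x_i$ per set $F_i$, with $\credits(x_i)=1$ and $P(x_i)=\top$;
  \item capacity $\kappa=\infty$;
  \item for each element $e\in U$, one requirement group $(S_e,1)$ with $S_e=\{x_i : e\in F_i\}$.
\end{itemize}
Since every element lies in some set, $S_e\neq\emptyset$, so $0\le 1\le\credits(S_e)$ and the group is admissible under Definition~\ref{def:curriculum}. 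The groups overlap exactly when sets share elements. The instance is conjunctive (indeed prerequisite-free) and uses only unit credits.

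Next I will establish the correspondence. Because prerequisites are vacuous and capacity is free, every subset $X\subseteq V$ is the taken set of a feasible plan: a single term containing $X$ satisfies (F1) and (F2). Such a plan is complete iff $X\cap S_e\neq\emptyset$ for every $e$, that is, iff $\{F_i : x_i\in X\}$ covers $U$. With unit weights, $\credits(X)=|X|$, hence
\[
\load(C)=\min\{|\mathcal K| : \mathcal K \text{ covers } U\},
\]
and $\load(C)\le t$ iff the \textsc{Set Cover} instance is a yes-instance. The map is clearly polynomial.

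The one point needing care is the word \emph{strongly}. All integers in the constructed instance are bounded by a polynomial in its size: the weights and thresholds are $1$, $\kappa=\infty$, and $L=t\le s$. The instance is therefore already in unary form. Because \textsc{Set Cover} is NP-hard and contains no large numbers, NP-hardness of \textsc{Min-Load} on these numerically bounded instances is exactly strong NP-hardness. If desired, one can restrict again to \textsc{Vertex Cover} so that each group has size two, as in Theorem~\ref{thm:loadhard}.

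I expect no genuine obstacle. The only subtle point is noting that overlap is essential: with disjoint groups the problem decomposes into independent single-group subset-sum problems, which are pseudo-polynomial by the remark preceding the theorem. That is why the strong hardness is attributed to overlap.
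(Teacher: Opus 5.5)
Your proposal is correct and is essentially the paper's own proof: the same \textsc{Set Cover} reduction with one unit-credit, prerequisite-free course $x_i$ per set and one overlapping requirement group $(S_e,1)$ per element under $\kappa=\infty$, with strong hardness following because every integer in the constructed instance is $1$. Your extra checks (admissibility $1 \le \credits(S_e)$, the single-term plan, the optional \textsc{Vertex Cover} restriction) are harmless refinements. One small point: state $L=t\le s$ as a without-loss-of-generality cap, since an instance with $t \ge s$ is trivially a yes-instance.
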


\begin{proof}
Reduction from \textsc{Set Cover} (Definition~\ref{def:setcover}), NP-hard by Proposition~\ref{fact:setcover}. Given a universe $U = \{e_1,\dots,e_q\}$, a family $F_1,\dots,F_s \subseteq U$, and an integer $t$, introduce for each set $F_i$ a course $x_i$ with $\credits(x_i) = 1$ and $P(x_i) = \top$, and for each element $e \in U$ a requirement group $(S_e, 1)$ with $S_e = \{ x_i : e \in F_i \}$. Set $\kappa = \infty$. Since every prerequisite is $\top$, any subset of $\{x_1,\dots,x_s\}$ is the taken set of a feasible plan under free capacity, and such a plan is complete iff, for every $e$, it takes at least one $x_i$ with $e \in F_i$, that is, iff the taken courses form a cover of $U$. As every credit is one, the load of a plan equals the number of courses it takes, so $\load(C) = \min\{\, |\mathcal{K}| : \mathcal{K} \text{ covers } U\,\}$ and $\load(C) \le t$ iff $U$ has a cover of size at most $t$. The reduction is polynomial and every integer in the constructed instance equals $1$, so the hardness is strong.
\end{proof}

Elective choice therefore matches disjunction in strength once groups overlap, both giving strongly NP-hard load; only the restricted case of a single requirement group, equivalently pairwise-disjoint groups, is confined to weak, pseudo-polynomial hardness. The two commitments remain genuinely independent, since each is hard in the total absence of the other, disjunction with mandatory (hence non-overlapping) requirements by Theorem~\ref{thm:loadhard} and overlapping electives with purely conjunctive prerequisites by Theorem~\ref{thm:loadover}. The load objective under electives is a covering analogue of the partially ordered knapsack problem, whose hardness and approximability were studied by \citet{kolliopoulos2007}.

\begin{theorem}[Duality of hardness]\label{thm:duality}
For the two planning objectives, the sources of intractability are disjoint:
\begin{center}
\setlength{\tabcolsep}{4.5pt}\begin{tabular}{lll}
\hline
 & \emph{free capacity} $(\kappa = \infty)$ & \emph{bounded capacity} \\
\hline
\textsc{Time-to-Degree} & polynomial \emph{(Thm.~\ref{thm:timefree})} & strongly NP-complete \emph{(Thm.~\ref{thm:capacityhard})} \\
\textsc{Min-Load} & strongly NP-complete \emph{(Thms.~\ref{thm:loadhard},~\ref{thm:loadover})} & strongly NP-complete \\
\hline
\end{tabular}
\end{center}
Time to degree is hard only through the capacity bound and is insensitive to both disjunction and elective choice (Theorem~\ref{thm:timefree}). The complexity of the load objective does not depend on the capacity bound; load is hard through either of two independent commitments, disjunction (Theorem~\ref{thm:loadhard}) or overlapping electives (Theorem~\ref{thm:loadover}), and is polynomial on the conjunctive, mandatory fragment (Proposition~\ref{prop:loadconj}). The sources of hardness are thus disjoint: capacity governs the time objective alone, while disjunction and electives govern the load objective alone.
\end{theorem}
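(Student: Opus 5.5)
The table has four cells, and I will establish each one by combining membership in NP with results already proved; the prose claims then follow by reading off the table.

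\emph{Membership.} For NP membership of both decision problems I will cite the remark after Proposition~\ref{prop:membership}. An optimal plan has at most $n$ nonempty terms, so it is a polynomial-size certificate, and recognition is linear-time. Every ``NP-complete'' entry therefore reduces to a strong NP-hardness statement.

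\emph{The four cells.}
\begin{itemize}
\item Time to degree with free capacity is polynomial by Theorem~\ref{thm:timefree}. That theorem covers arbitrary prerequisites and arbitrary requirement groups, which gives the claimed insensitivity to disjunction and to electives.
\item Time to degree with bounded capacity is strongly NP-hard by Theorem~\ref{thm:capacityhard}.
\item Load with free capacity is NP-hard by Theorem~\ref{thm:loadhard}. For strong hardness there, I note that the Set Cover reduction uses weights $1$ and $q+1$, which are polynomially bounded; equally, Theorem~\ref{thm:loadover} uses unit weights only.
\item Load with bounded capacity is the one cell not stated directly in an earlier result, so I will derive it by padding the free-capacity reductions.
\end{itemize}

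\emph{The bounded-capacity load cell.} In the reduction of Theorem~\ref{thm:loadover}, replace $\kappa=\infty$ by $\kappa=1$. All courses are prerequisite-free and unit-weight, so any taken set $X$ can be scheduled one course per term, giving $|X|$ terms. Feasibility of a taken set is therefore unaffected, and the load of a set is unchanged, so the same equivalence with Set Cover holds. This transfer is the step needing care, because load does not count terms and capacity only restricts which taken sets are schedulable.

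To support the claim that the complexity of load ``does not depend on capacity'', I will state the general observation behind this. When every course has weight at most $\kappa$, any taken set closed under the chosen witnesses can be scheduled in topological order with one course per term, as in the proof of Proposition~\ref{prop:loadconj}. So for any finite $\kappa \ge \max_v \credits(v)$ the set of achievable taken sets is the same as under free capacity, and $\load$ is identical. The reduction of Theorem~\ref{thm:loadhard} also transfers, with $\kappa = q+1$ and one course per term.

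\emph{The remaining prose claims.}
\begin{itemize}
\item Polynomiality of load on the conjunctive, mandatory fragment is Proposition~\ref{prop:loadconj}.
\item Independence of the two load sources holds because Theorem~\ref{thm:loadhard} uses mandatory groups only, while Theorem~\ref{thm:loadover} is conjunctive.
\item Disjointness is the conjunction of two facts. First, time to degree is tractable whenever capacity is free, regardless of disjunction or electives. Second, load is tractable when both disjunction and electives are absent, and hard with either one alone, regardless of capacity.
\end{itemize}

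The main obstacle is only the bookkeeping for the bounded-capacity load cell and for phrasing ``disjoint'' precisely. The mathematical content is already contained in the earlier theorems.
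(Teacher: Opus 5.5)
Your proposal is correct and follows the paper's proof essentially step for step. You take NP membership from the remark after Proposition~\ref{prop:membership}, the time cells from Theorems~\ref{thm:timefree} and~\ref{thm:capacityhard}, and strong hardness of load from the polynomially bounded weights in Theorems~\ref{thm:loadhard} and~\ref{thm:loadover}. For the bounded-capacity load cell you use the same observation as the paper: any finite $\kappa$ at least the largest course weight leaves the reductions unchanged, since courses can be scheduled one per term, and your choices $\kappa = 1$ and $\kappa = q+1$ are concrete instances of it.
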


\begin{proof}
The time cells are Theorems~\ref{thm:timefree} and~\ref{thm:capacityhard}. The load cells are Theorems~\ref{thm:loadhard} and~\ref{thm:loadover}, whose disjunction and overlapping-elective reductions both use polynomially bounded weights and so give strong hardness, together with the observation that both reductions are unaffected by imposing a finite $\kappa$ no smaller than the largest single course weight, since their courses may be scheduled one per term. Membership in \textup{NP} holds for both problems, as noted after Proposition~\ref{prop:membership}, so the hard cells are completeness results. Polynomiality on the conjunctive mandatory fragment is Proposition~\ref{prop:loadconj}, and under free capacity time is polynomial by Theorem~\ref{thm:timefree}.
\end{proof}

\section{The connection to the curricular-complexity metric}\label{sec:metric-sec}

We close the formal development by relating the model to the structural complexity score of curricular analytics, which is defined on the prerequisite graph under the all-conjunctive reading.

\begin{definition}[Conjunctive collapse]\label{def:collapse}
The \emph{conjunctive collapse} of a curriculum $C$ is the conjunctive curriculum $C^{\wedge}$ obtained by replacing each clause $\bigvee_{u \in A_{v,j}} u$ of every prerequisite by the conjunction $\bigwedge_{u \in A_{v,j}} u$, that is, by treating every listed alternative as required. All other data are unchanged.
\end{definition}

\begin{proposition}[The metric is an exact-on-the-fragment upper bound]\label{prop:metric}
For every well-founded curriculum $C$ and every course $v$,
\[
  h(v) \;\le\; \ell_{C^{\wedge}}(v),
\]
where $\ell_{C^{\wedge}}$ is the chain-length function of Definition~\ref{def:closure} computed in the collapse $C^{\wedge}$. Consequently, for mandatory free-capacity curricula,
\[
  \ttd(C) \;\le\; \ttd(C^{\wedge}) \;=\; \max_{v \in \cl(M)} \ell_{C^{\wedge}}(v),
\]
with equality throughout when $C$ is conjunctive. The nonnegative quantity $\ttd(C^{\wedge}) - \ttd(C)$ is the \emph{disjunctive slack}.
\end{proposition}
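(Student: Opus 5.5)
The plan is to prove the pointwise inequality $h(v) \le \ell_{C^{\wedge}}(v)$ first and then obtain the statement about time to degree from Theorems~\ref{thm:disjfree} and~\ref{thm:flatlist}. The collapse does not change the support digraph, because the set of courses occurring in $P(v)$ is the same in $C$ and $C^{\wedge}$. So $D(C^{\wedge}) = D(C)$, and both $h$ and $\ell_{C^{\wedge}}$ are defined by recursion along the same topological order. I would prove the inequality by induction in that order.

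For the base case, take $v$ with $P(v)=\top$. Then $v$ has no in-neighbours, and $h(v) = 1 = \ell_{C^{\wedge}}(v)$.

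For the inductive step, suppose $k_v \ge 1$. For each clause $j$, choose any $u \in A_{v,j}$. Then
\[
\min_{u' \in A_{v,j}} h(u') \le h(u) \le \ell_{C^{\wedge}}(u) \le \max_{(u,v)\in E} \ell_{C^{\wedge}}(u),
\]
where the middle inequality is the induction hypothesis and every $u \in A_{v,j}$ is an in-neighbour of $v$. Taking the maximum over $j$ and adding $1$ gives $h(v) \le \ell_{C^{\wedge}}(v)$. Equality when $C$ is conjunctive is immediate: every clause is a singleton, so the min is trivial, $C^{\wedge} = C$, and the two recurrences coincide term by term.

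For the consequence, take $C$ mandatory with free capacity. Theorem~\ref{thm:disjfree} gives $\ttd(C) = \max_{v\in M} h(v)$. The collapse $C^{\wedge}$ is conjunctive, mandatory with the same $M$, and has free capacity, so Theorem~\ref{thm:flatlist} gives $\ttd(C^{\wedge}) = \max_{v \in \cl(M)} \ell_{C^{\wedge}}(v)$, with the closure taken in $C^{\wedge}$. Since $M \subseteq \cl(M)$,
\[
\max_{v\in M} h(v) \le \max_{v \in M}\ell_{C^{\wedge}}(v) \le \max_{v\in\cl(M)} \ell_{C^{\wedge}}(v).
\]
In fact the last two maxima are equal, because $\ell$ is monotone along arcs: every vertex of $\cl(M)$ is an ancestor of some $v \in M$ and so has smaller $\ell$. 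Either way the chain yields $\ttd(C)\le\ttd(C^{\wedge})$. In the conjunctive case $C=C^{\wedge}$, so all quantities coincide, and the disjunctive slack is then nonnegative by construction.

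The only delicate point I expect is bookkeeping rather than mathematics. The closure $\cl$ in the displayed formula must be read in $C^{\wedge}$ (Definition~\ref{def:closure} requires a conjunctive curriculum), and one must also check that the convention $\max\emptyset = 0$ is consistent when $M=\emptyset$. Both sides are then $0$, so the statement holds trivially.
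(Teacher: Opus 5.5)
Your proposal is correct and is essentially the paper's proof: induction along a topological order of the common support digraph, bounding each clause's minimum by the maximum over in-neighbours, and then reading off $\ttd(C)\le\ttd(C^{\wedge})$ from Theorems~\ref{thm:disjfree} and~\ref{thm:flatlist}. You are more careful than the paper in noting $D(C^{\wedge})=D(C)$ and in passing from $\max_{v\in M}$ to $\max_{v\in\cl(M)}$ (closure taken in $C^{\wedge}$); one small correction is that nonnegativity of the slack follows from that inequality in general, not from the conjunctive case as your closing sentence phrases it.
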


\begin{proof}
Induction on topological order. If $P(v) = \top$ then $h(v) = 1 = \ell_{C^{\wedge}}(v)$. Otherwise, in $C^{\wedge}$ every alternative of every clause is a prerequisite arc, so $\ell_{C^{\wedge}}(v) = 1 + \max_j \max_{u \in A_{v,j}} \ell_{C^{\wedge}}(u)$. Using the induction hypothesis $h(u) \le \ell_{C^{\wedge}}(u)$ and the elementary inequality $\min \le \max$ on each clause,
\begin{align*}
  h(v) &= 1 + \max_j \min_{u \in A_{v,j}} h(u) \\
       &\le 1 + \max_j \max_{u \in A_{v,j}} \ell_{C^{\wedge}}(u) \\
       &= \ell_{C^{\wedge}}(v).
\end{align*}
Taking the maximum over $v \in M$ and applying Theorems~\ref{thm:disjfree} and~\ref{thm:flatlist} gives the displayed relation between the two times to degree. When $C$ is conjunctive every clause is a singleton, $\min$ and $\max$ coincide, and the inequality is an equality.
\end{proof}

\begin{remark}[Interpretation]
The maximal chain length of the conjunctive collapse is exactly the maximum \emph{delay factor} of curricular analytics, one of the two components (together with the blocking factor) whose course-wise sum constitutes the aggregate curricular-complexity score \citep{slim2014,heileman2018}. Proposition~\ref{prop:metric} therefore concerns that delay-factor component specifically, and states that it is a polynomially computable \emph{upper bound} on the true time to degree, tight precisely on the conjunctive curricula for which the graph abstraction is faithful, and loose elsewhere by the disjunctive slack. The slack is strictly positive only when some clause offers alternatives of differing level, whether in the course's own prerequisite or, through propagation, in that of an ancestor; this is the structural condition under which the all-conjunctive reading overstates rigidity, and it is the formal counterpart of the empirical correction reported across institutions by our companion study \citep{turaev2026}. Figure~\ref{fig:worked} works a small curriculum through both readings, and it exhibits a second effect that the empirical section will quantify: the slack that originates at a disjunctive course propagates downward along the conjunctive chain above it.
\end{remark}

\begin{figure}[htbp]
\centering
\Description{Two five-level layered diagrams of the same nine courses shown side by side. The left diagram gives the true AND/OR levels; the right diagram gives the conjunctive collapse, in which three courses each move up by one level. The comparison shows that treating a non-binding alternative as required raises the level of a course and of its purely conjunctive descendants, while the deepest all-conjunctive chain, and therefore the time to degree, is left unchanged.}
\includegraphics[width=\linewidth]{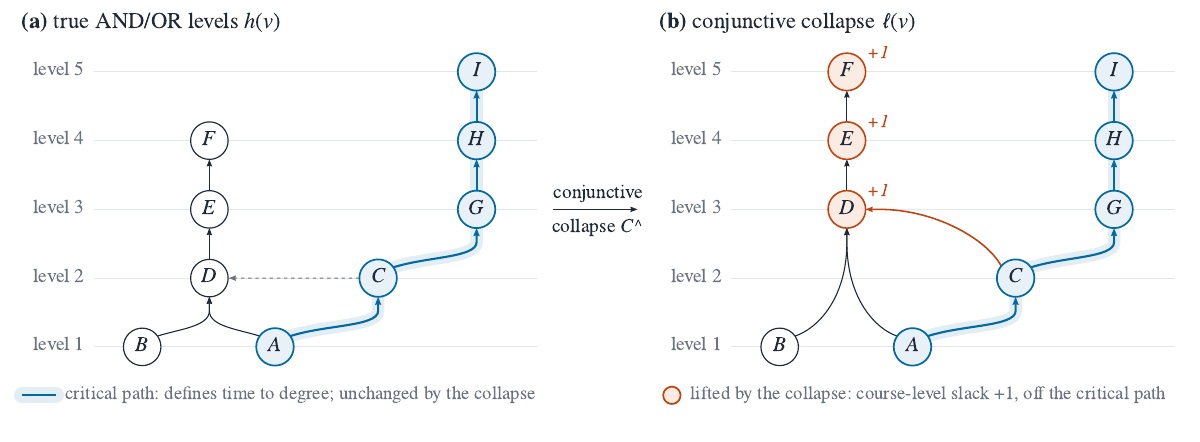}
\caption{A nine-course example separating the two readings; every course other than $D$ requires the single course beneath it, and $P(D)=(A\vee C)\wedge B$. (a) The true AND/OR levels $h(v)$: the clause of $D$ is met by $A$ alone, so the alternative $C$, drawn dashed, is non-binding. (b) The conjunctive collapse $C^{\wedge}$, levels $\ell(v)$: every listed alternative is treated as required, forcing $C$ before $D$ and lifting $D$, $E$, and $F$ by one level each (terracotta, marked $+1$). The lift at $E$ and $F$ is inherited along the conjunctive chain above $D$, which is how the course-level slack $s(v)=\ell_{C^{\wedge}}(v)-h(v)$ propagates. The deepest chain $A\to C\to G\to H\to I$ is conjunctive throughout, so the collapse leaves it unchanged and the curriculum-level slack $\ttd(C^{\wedge})-\ttd(C)=5-5$ is zero}
\label{fig:worked}
\end{figure}

\section{The empirical boundary across the corpus}\label{sec:empirical}

The value of the model rests not only on its theorems but on where real curricula fall relative to the boundaries they draw. We instantiate the parameters of the model on a corpus of twenty-two universities of the United Arab Emirates, taking for each institution its most recent catalog edition, a cross-section of $6{,}422$ courses that carry a parsed prerequisite expression once corequisite relations are set aside (see the data cautions). The per-institution figures below are reported for the seventeen institutions with at least twenty such courses; the three smaller institutions are included in the totals but omitted from Table~\ref{tab:byinst} and the figures.

\subsection{Real curricula inhabit the tractable fragment}

Of the $6{,}422$ prerequisite-bearing courses in the cross-section, $88.0$ percent are purely conjunctive and only $12.0$ percent carry a disjunction, so the expressive feature that governs the load objective and the metric correction is, at the level of the individual course, the exception. The two width parameters that the theory identifies are small and concentrated, as Table~\ref{tab:widths} shows: the conjunctive width has median one and ninetieth percentile two, and the disjunctive width, measured only on the clauses that actually branch, has median two and ninetieth percentile four. By Corollary~\ref{cor:widthirrelevant} the disjunctive width never bears on the tractability of the time objective. It does not rescue the load objective either: by Theorem~\ref{thm:loadhard} load is NP-hard already at disjunctive width two, so the small observed widths do not place these instances in a tractable class.

\begin{table}[!htb]
\centering
\small
\caption{Distribution of the two width parameters over the $6{,}422$ prerequisite-bearing courses of the cross-section. Conjunctive width is the number of clauses per course; disjunctive width is the number of alternatives per branching clause. The final bin collects all larger values}
\label{tab:widths}
\begin{tabular}{lrrrrrr}
\toprule
value & $1$ & $2$ & $3$ & $4$ & $5$ & $\ge 6$ \\
\midrule
courses by conjunctive width & $4903$ & $1122$ & $203$ & $94$ & $22$ & $78$ \\
branching clauses by disjunctive width & --- & $701$ & $121$ & $64$ & $36$ & $35$ \\
\bottomrule
\end{tabular}
\end{table}

\subsection{Disjunction is concentrated, not spread}

The disjunction that drives both the load hardness and the metric correction is not distributed evenly across the system. The blue bars of Figure~\ref{fig:shares} rank the institutions by the share of their prerequisite-bearing courses that carry an alternative, and the distribution is sharply bimodal: two institutions, Khalifa University at $45.0$ percent and the American University of Sharjah at $34.3$ percent, stand far above the rest, a middle group sits near or below ten percent, and several institutions publish essentially no alternatives at all. These two institutions are precisely the ones at which our companion measurement \citep{turaev2026} finds the largest disjunctive correction to curricular complexity, so the theoretical boundary and the empirical correction pick out the same places from opposite directions. The terracotta bars of the same figure report the share of courses on which the all-conjunctive reading is actually wrong, a quantity we take up next.

\begin{figure}[htbp]
\centering
\Description{Paired horizontal bars for seventeen UAE institutions, sorted by disjunctive share, comparing the share of prerequisite-bearing courses that carry a disjunction (solid bars) with the share whose course-level slack is positive (hatched bars). An inset histogram gives the distribution of slack magnitude, from one to five, over the 573 slack-positive courses. Khalifa University and the American University of Sharjah lead on disjunctive share, while at the United Arab Emirates University and the American University in Dubai the slack-positive share exceeds the disjunctive share.}
\includegraphics[width=0.82\linewidth]{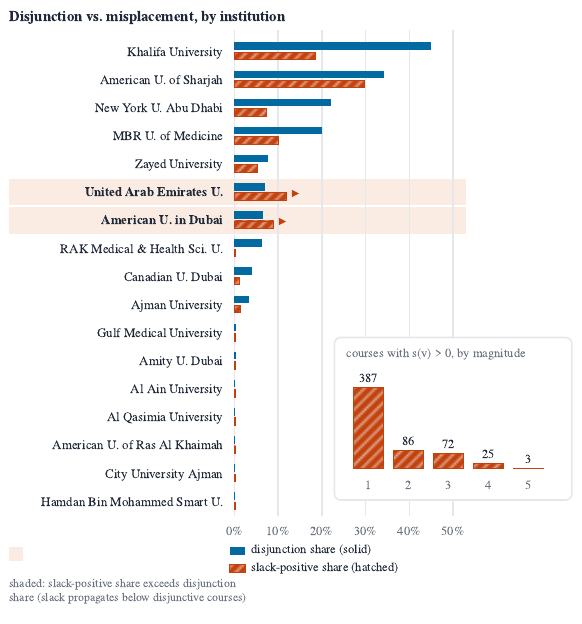}
\caption{Where the all-conjunctive reading misplaces courses. For each of the seventeen institutions with at least twenty prerequisite-bearing courses, taking the latest catalog edition and excluding corequisites, the share of prerequisite-bearing courses that carry a disjunction (blue, solid) beside the share whose course-level slack $s(v)=\ell_{C^{\wedge}}(v)-h(v)$ is positive (terracotta, hatched), sorted by the disjunctive share; the inset gives the distribution of slack magnitude over the $573$ courses with $s(v)>0$. At most institutions the two shares track one another, but at the United Arab Emirates University and the American University in Dubai the slack-positive share exceeds the disjunctive share, because slack propagates from disjunctive courses down the purely conjunctive chains beneath them}
\label{fig:shares}
\end{figure}

\subsection{The slack is real, and shallow chains leave capacity to bind}

Proposition~\ref{prop:metric} concerns the disjunctive slack at the level of a whole curriculum, the gap $\ttd(C^\wedge) - \ttd(C)$ between the collapse depth and the true time to degree. It is worth separating that curriculum-level quantity from the \emph{course-level} slack $s(v) \eqdef \ell_{C^\wedge}(v) - h(v)$, the amount by which the all-conjunctive reading overstates the level of an individual course. The two behave differently in the data, and the difference is itself informative. The course-level slack is positive on $573$ courses, $8.9$ percent of the cross-section, with a mean of $1.55$ terms among those and a maximum of five, so the all-conjunctive reading inflates the depth of one course in eleven, by more than a term and a half on average. The curriculum-level slack, by contrast, is exactly zero at every institution: the longest chain and the maximum level coincide in every case, because the deepest chain of each program runs through mandatory conjunctive courses that no alternative shortens. Disjunction, in other words, redistributes the depths of individual courses without shortening any program's critical path. The overstatement is modest in magnitude as well as in reach, as the inset of Figure~\ref{fig:shares} shows: of the $573$ slack-positive courses, $387$ are overstated by a single level and the count falls away steeply thereafter, to $86$ at two levels, $72$ at three, $25$ at four, and only three at five.

The terracotta bars of Figure~\ref{fig:shares} record, for each institution, the share of prerequisite-bearing courses on which the slack is positive, set against the disjunctive share in blue. The slack-positive share is positive only where alternatives of differing level occur, as Proposition~\ref{prop:metric} requires, and its largest values fall at the high-share institutions, though the relation to the disjunctive share is not monotone: the American University of Sharjah, at a disjunctive share below Khalifa's, carries the larger slack-positive share, $29.7$ against $18.6$ percent. Two institutions invert the ordering outright, in that their slack-positive share \emph{exceeds} their disjunctive share, the United Arab Emirates University with $11.9$ against $7.1$ percent and the American University in Dubai with $9.0$ against $6.6$ percent. The inversion has a structural cause worth naming. A purely conjunctive course carries no alternative of its own, so under Proposition~\ref{prop:metric} it can acquire positive slack only from a prerequisite that already carries it, because if every prerequisite of a conjunctive course had zero slack then its level would coincide under the two readings; the slack therefore \emph{originates} at a disjunctive course and \emph{propagates} downward along the conjunctive chains beneath it, exactly as Figure~\ref{fig:worked} illustrates at the courses $E$ and $F$. Across the corpus this propagation accounts for nearly half of the affected courses: of the $573$ slack-positive courses, $299$ are disjunctive in their own right and the remaining $274$ are purely conjunctive descendants that inherit their slack. At the two institutions that invert the ordering the propagated courses are the majority, $112$ of $160$ at the United Arab Emirates University and $30$ of $45$ at the American University in Dubai, which is why a curriculum with few alternatives can still present many courses whose depth the all-conjunctive reading overstates.

Table~\ref{tab:byinst} gives the institution-level detail, including the longest prerequisite chain, which is the depth that would bound time to degree were capacity not to bind, and which equals the maximum level in every row. That coincidence is the empirical counterpart of the hardness duality of Theorem~\ref{thm:duality}. The longest chain has a median across institutions of about eight courses and nowhere exceeds ten. Since a chain of length $\ell$ forces at least $\ell$ terms, a depth of eight to ten sits close to the nominal duration of a bachelor's degree, so prerequisite depth does not by itself extend a program beyond the schedule a student would follow in any case; it is the per-term capacity, the number of courses a term can hold, rather than the logical structure of prerequisites, that governs whether the degree can be completed on time. This reading is consistent with the theory, disjunction being tractable for the time objective by Theorem~\ref{thm:timefree} and the capacity bound being the commitment under which that objective turns hard (Theorem~\ref{thm:capacityhard}), with the practical qualification of Remark~\ref{rem:realistic}. The corpus thus locates real curricula on the tractable side of every boundary but one, and identifies that one, capacity, as the constraint a planning tool must actually contend with.

\begin{table}[t]
\centering
\footnotesize
\setlength{\tabcolsep}{3pt}
\caption{Institution-level cross-section (latest edition, seventeen institutions with at least twenty prerequisite-bearing courses), ordered by disjunctive share. ``Longest chain'' is the number of courses on the longest prerequisite chain, which equals the maximum level in every row, so the curriculum-level disjunctive slack is zero throughout; ``slack $>0$'' counts courses whose conjunctive-collapse level exceeds their true level; ``mean slack'' averages that course-level slack over those courses}
\label{tab:byinst}
\begin{tabular}{lrrrrr}
\toprule
institution (latest edition) & CNF courses & disj.\ \% & longest chain & slack ${>}0$ & mean slack \\
\midrule
Khalifa University & $625$ & $45.0$ & $8$ & $116$ & $1.39$ \\
American University of Sharjah & $671$ & $34.3$ & $8$ & $199$ & $2.16$ \\
New York University Abu Dhabi & $314$ & $22.0$ & $9$ & $23$ & $1.52$ \\
Mohammed Bin Rashid U.\ (Medicine) & $20$ & $20.0$ & $4$ & $2$ & $1.00$ \\
Zayed University & $420$ & $7.6$ & $10$ & $22$ & $1.00$ \\
United Arab Emirates University & $1340$ & $7.1$ & $8$ & $160$ & $1.10$ \\
American University in Dubai & $499$ & $6.6$ & $10$ & $45$ & $1.22$ \\
RAK Medical and Health Sciences U. & $64$ & $6.2$ & $8$ & $0$ & $0.00$ \\
Canadian University Dubai & $320$ & $4.1$ & $10$ & $4$ & $2.00$ \\
Ajman University & $152$ & $3.3$ & $7$ & $2$ & $1.00$ \\
Gulf Medical University & $254$ & $0.4$ & $9$ & $0$ & $0.00$ \\
Amity University Dubai & $367$ & $0.3$ & $8$ & $0$ & $0.00$ \\
Al Ain University & $591$ & $0.0$ & $7$ & $0$ & $0.00$ \\
Al Qasimia University & $125$ & $0.0$ & $5$ & $0$ & $0.00$ \\
American University of Ras Al Khaimah & $136$ & $0.0$ & $6$ & $0$ & $0.00$ \\
City University Ajman & $415$ & $0.0$ & $7$ & $0$ & $0.00$ \\
Hamdan Bin Mohammed Smart University & $82$ & $0.0$ & $7$ & $0$ & $0.00$ \\
\bottomrule
\end{tabular}
\end{table}

\subsection{Data cautions}

Several features of the corpus bear on the figures. Corequisite relations, the $185$ edges marked \emph{concurrent} in the cross-section, are not prerequisites, since a corequisite may be taken in the same term; they are therefore excluded from the prerequisite structure, which drops from the prerequisite-bearing set the $74$ courses whose only listed relations were corequisite. Counting them as prerequisites, as an earlier pass did, would inflate chain depth and level. Three further confounds are controlled. First, in several institution-editions the parsed prerequisite relation contains a directed cycle, a parsing artifact arising chiefly from cross-listed or renumbered courses; to keep the computation deterministic and order-independent, the depth and level computations remove every prerequisite arc whose two endpoints fall in the same strongly connected component of the support digraph, which can only understate depth, so the reported chains are lower bounds. Second, $687$ courses in the full corpus, none of them in the latest-edition cross-section analysed here, carry the corpus annotation \emph{degraded-flat-list}, indicating a source that rendered disjunctions as flat conjunctive lists; because the annotation touches no course in the cross-section, the reported shares and widths are unaffected by it, and across the full corpus such courses can only lose disjunction in transcription, so the disjunctive shares are, if anything, conservative. As a consistency check, the disjunctive share of this latest-edition cross-section, $12.0$ percent, agrees with the $10.7$ percent measured over all editions of all institutions, the small difference reflecting the growth of expressive notation in recent catalogs.

\section{Two comparison problems: subsumption and consensus}\label{sec:compare}

The plan-language formulation makes two further questions precise. We settle the complexity of each. The first compares two curricula; the second reconciles several.

\subsection{Completions, and why na\"ive transfer is vacuous}

\begin{definition}[Completion set]\label{def:comp}
The \emph{completion set} of a curriculum $C$ is $\mathrm{Comp}(C) \eqdef \{\, \taken(\pi) : \pi \in L(C)\,\} \subseteq 2^{V}$.
\end{definition}

\begin{lemma}[Characterization of completions]\label{lem:comp}
Let $C$ be well-founded with free capacity. Then $X \in \mathrm{Comp}(C)$ if and only if
\begin{enumerate}[label=(\alph*),nosep]
  \item $X \models P(v)$ for every $v \in X$ \hfill(prerequisite-closed), and
  \item $\credits(X \cap S_g) \ge r_g$ for every group $(S_g,r_g) \in R$ \hfill(requirement-satisfying).
\end{enumerate}
\end{lemma}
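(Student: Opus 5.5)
We prove the two directions separately, using the AND/OR level $h$ of Definition~\ref{def:level} only as a convenient scheduling key. No new machinery is needed.

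\emph{Necessity.} Let $X = \taken(\pi)$ for some $\pi = (T_1,\dots,T_m) \in L(C)$. Condition (b) is exactly the completeness clause of Definition~\ref{def:plan}, since $\taken(\pi) = X$. For (a), take $v \in X$, say $v \in T_i$. By (F2) we have $\pre_i(\pi) \models P(v)$. Since $\pre_i(\pi) \subseteq X$, Observation~\ref{fact:monotone} gives $X \models P(v)$. Capacity plays no role in this direction.

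\emph{Sufficiency.} Suppose $X$ satisfies (a) and (b). We must exhibit a feasible plan whose taken set is exactly $X$. The natural choice is to layer $X$ by the level it induces on itself. Define $h_X : X \to \Zpos$ in topological order of $D(C)$ by
\[
  h_X(v) \;=\; 1 + \max_{j} \min_{u \in A_{v,j} \cap X} h_X(u),
\]
with $\max \emptyset = 0$, and put $T_i \eqdef \{ v \in X : h_X(v) = i \}$.

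Well-definedness is the one point needing care. By (a), every clause $A_{v,j}$ of a course $v \in X$ meets $X$, so each inner minimum ranges over a nonempty set. Each $u \in A_{v,j}$ precedes $v$ in topological order, so the recursion is well founded; this uses that $C$ is well-founded.

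For (F2), fix $v \in T_i$ and a clause $j$. The minimizing $u \in A_{v,j} \cap X$ satisfies $h_X(u) \le i-1$, so $u \in \pre_i(\pi)$, and every clause is met. Free capacity gives (F1). The terms are disjoint and cover exactly $X$, and discarding empty levels does not affect either property. Completeness is (b).

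An alternative is simply to list $X$ one course per term in a topological order of $D(C)$ restricted to $X$. Each clause of $v$ is then witnessed by an element of $X$ that occurs earlier, because every element of $A_{v,j}$ is an in-neighbour of $v$. This variant even works for any $\kappa \ge \max_v \credits(v)$. We would state the level construction as the main argument and mention the one-per-term listing as a remark.

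The only real obstacle is ensuring that the witnesses used to satisfy the clauses lie inside $X$ rather than elsewhere in $V$. The plan must take exactly $X$, so we cannot reuse the witnesses $u^\ast_{v,j}$ of Theorem~\ref{thm:disjfree}, which may fall outside $X$. Restricting each clause to $A_{v,j} \cap X$ resolves this, and (a) is precisely what guarantees that these restricted clauses are nonempty.
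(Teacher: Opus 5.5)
Your proof is correct, and your necessity direction coincides with the paper's. For sufficiency, the paper uses exactly the variant you put in a remark: (a) supplies, for each clause of each $v \in X$, a witness in $A_{v,j} \cap X$. Listing $X$ one course per term in a topological order of $D(C)$ restricted to $X$ then puts every witness before its course, since witnesses are in-neighbours. Your main route instead layers $X$ by the restricted level $h_X$, which is the $X$-confined analogue of the layering in Theorem~\ref{thm:disjfree}. Both arguments rest on the same fact, that (a) places a witness for every clause inside $X$, but they give different things. The paper's listing needs no auxiliary recursion or well-definedness check. Because it never puts two courses in one term, it also proves the same characterization for any capacity $\kappa \ge \max_{v \in X} \credits(v)$, not only for $\kappa = \infty$. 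Your layering uses free capacity essentially, since a level may hold many courses. In exchange it yields a stronger by-product: rerunning the induction of Lemma~\ref{lem:levellower} with each clause cut down to $A_{v,j} \cap X$ shows that every feasible plan with taken set exactly $X$ has at least $\max_{v \in X} h_X(v)$ terms. Your plan is therefore a shortest one realizing $X$. The lemma needs neither refinement, so the one-per-term listing is the more economical proof of the statement as given.
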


\begin{proof}
($\Rightarrow$) If $X = \taken(\pi)$ for a feasible complete plan, then for $v \in X$, say $v \in T_i$, (F2) gives $\pre_i(\pi) \models P(v)$, and $\pre_i(\pi) \subseteq X$, so by Observation~\ref{fact:monotone} $X \models P(v)$; completeness is exactly (b).
($\Leftarrow$) Suppose $X$ satisfies (a) and (b). For each $v \in X$ with $P(v) \neq \top$ and each clause $j$, condition (a) provides a witness $u \in A_{v,j} \cap X$; fix one. Order $X$ by any topological order of $D(C)$ restricted to $X$ (possible by well-foundedness) and place the courses one per term in that order. Each witness of $v$ precedes $v$, so (F2) holds; free capacity gives (F1); the taken set is $X$, so (b) gives completeness. Hence $X \in \mathrm{Comp}(C)$.
\end{proof}

\begin{proposition}[Na\"ive transfer is vacuous]\label{prop:vacuous}
Let $C$ be well-founded with free capacity and $r_g \le \credits(S_g)$ for every group. Then $V \in \mathrm{Comp}(C)$; in particular $L(C) \neq \emptyset$, so under free capacity the two objectives of Definition~\ref{def:obj} are always finite and their $+\infty$ cases arise only under a binding capacity. Consequently, for any curricula $C_1, C_2$ and any matching, some completion of $C_2$ contains the image of every completion of $C_1$, so the requirement that the receiver merely \emph{honour} transferred credit while completing its degree is satisfied trivially.
\end{proposition}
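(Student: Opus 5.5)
The plan is to reduce everything to Lemma~\ref{lem:comp}, which characterizes completions under free capacity as exactly the prerequisite-closed, requirement-satisfying subsets of $V$. So it suffices to check that $X = V$ satisfies conditions (a) and (b) of that lemma.

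For (b), I use $V \cap S_g = S_g$. The hypothesis $r_g \le \credits(S_g)$ then gives $\credits(V \cap S_g) \ge r_g$ for every group.

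For (a), fix $v \in V$ with $P(v) \neq \top$. Each clause $A_{v,j}$ is nonempty by Definition~\ref{def:prereq}, so $A_{v,j} \cap V = A_{v,j} \neq \emptyset$. Observation~\ref{fact:clause} then yields $V \models P(v)$. This is the one point that needs care: the nonemptiness convention on clauses excludes the unsatisfiable empty disjunction $\bot$. If $\bot$ were allowed, (a) would fail.

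Lemma~\ref{lem:comp} now gives $V \in \mathrm{Comp}(C)$. Concretely, the plan places the courses of $V$ one per term in a topological order of $D(C)$. Hence $L(C) \neq \emptyset$, and both $\ttd(C)$ and $\load(C)$ are finite minima over a nonempty finite set. It follows that a value of $+\infty$ can arise only when $\kappa < \infty$.

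For the transfer consequence, take any matching and any completion $X_1$ of $C_1$; its image under the matching is some subset of $V_2$. The completion $V_2 \in \mathrm{Comp}(C_2)$ contains every subset of $V_2$, so it contains that image. One fixed completion of $C_2$ therefore works for every completion of $C_1$ at once, which is why the honour-transferred-credit requirement is trivially met. The argument is routine throughout; its only delicate points are the appeal to nonempty clauses for (a) and the standing hypothesis $r_g \le \credits(S_g)$ for (b).
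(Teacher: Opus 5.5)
Your proposal is correct and follows the paper's proof essentially line for line: take $X = V$ in Lemma~\ref{lem:comp}, get (a) from the nonemptiness of every clause $A_{v,j}$ and (b) from $\credits(V \cap S_g) = \credits(S_g) \ge r_g$, then note that the single completion $V_2 \in \mathrm{Comp}(C_2)$ contains the image of every completion of $C_1$. Your remarks on why $L(C)$ is a nonempty finite set and why the nonempty-clause convention matters are accurate, though they go slightly beyond what the paper spells out.
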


\begin{proof}
Taking $X = V$ in Lemma~\ref{lem:comp}: every clause $A_{v,j}$ is a nonempty subset of $V$, so $V \models P(v)$ for all $v$, giving (a); and $\credits(V \cap S_g) = \credits(S_g) \ge r_g$ gives (b). Thus $V \in \mathrm{Comp}(C_2)$, and the image of $\taken(\pi_1)$ under any matching is contained in $V$ for every plan $\pi_1$.
\end{proof}

Proposition~\ref{prop:vacuous} shows that the substantive articulation question is not whether the receiver can complete while retaining transferred courses, which it always can by taking everything, but whether completing the sending program already \emph{constitutes} completion of the receiving one. This is the direction that matters for recognition of one degree by the standard of another.

\subsection{Program subsumption}

Throughout this subsection both curricula have free capacity, so that Lemma~\ref{lem:comp} characterizes their completions; the bounded-capacity case is discussed at the end.

\begin{definition}[Subsumption]\label{def:subsume}
Let $C_1$ and $C_2$ be free-capacity curricula with a matching $\mu \subseteq V_1 \times V_2$, and for $X \subseteq V_1$ write $\mu(X) = \{ w \in V_2 : (v,w) \in \mu,\ v \in X\}$. We say $C_1$ \emph{subsumes} $C_2$ under $\mu$, written $C_1 \sqsubseteq_\mu C_2$, if $\mu(X) \in \mathrm{Comp}(C_2)$ for every $X \in \mathrm{Comp}(C_1)$; that is, every way of completing $C_1$ already completes $C_2$ once credit is transferred.
\end{definition}

\begin{proposition}[Membership]\label{prop:submember}
For free-capacity curricula, deciding $C_1 \sqsubseteq_\mu C_2$ is in \textup{coNP}.
\end{proposition}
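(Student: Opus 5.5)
To place non-subsumption in \textup{NP}, we exhibit a short counterexample certificate and check it in polynomial time using Lemma~\ref{lem:comp}. By Definition~\ref{def:subsume}, $C_1 \not\sqsubseteq_\mu C_2$ holds exactly when some $X \in \mathrm{Comp}(C_1)$ has $\mu(X) \notin \mathrm{Comp}(C_2)$. The certificate is therefore a subset $X \subseteq V_1$, given as a characteristic vector of $|V_1|$ bits, which is polynomial in the input size.

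Verification runs in three steps. First, we check $X \in \mathrm{Comp}(C_1)$. Since $C_1$ has free capacity, Lemma~\ref{lem:comp} reduces this to two conditions. Condition (a) asks that every clause $A_{v,j}$ of every $v \in X$ meets $X$, which is a clause-wise test by Observation~\ref{fact:clause}. Condition (b) asks that $\credits(X \cap S_g) \ge r_g$ for every group. Both tests take time linear in $\sum_v |P(v)| + \sum_g |S_g|$, with the credit sums done in binary arithmetic. Second, we compute $\mu(X)$ by one scan of the matching. Third, we check $\mu(X) \notin \mathrm{Comp}(C_2)$, again via Lemma~\ref{lem:comp} since $C_2$ has free capacity. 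This means finding either some $w \in \mu(X)$ and clause of $P_2(w)$ disjoint from $\mu(X)$, or some group of $C_2$ whose credit requirement $\mu(X)$ fails. That is a polynomial-time evaluation of the negated conditions.

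The verifier accepts iff the first check succeeds and the third check finds a violation. It accepts some certificate iff a counterexample exists, so the complement of subsumption is in \textup{NP} and subsumption is in \textup{coNP}.

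The only point needing care is that $\mathrm{Comp}(\cdot)$ is defined through plans, whose existence is a search problem. The free-capacity hypothesis, through Lemma~\ref{lem:comp}, is what turns membership in $\mathrm{Comp}$ into a static, polynomially checkable property of the set itself. This removes any need to guess a plan for $C_1$ or to certify the nonexistence of a plan for $C_2$; the latter would otherwise be a universal statement and would break the argument. With the lemma in hand, no step is a real obstacle.
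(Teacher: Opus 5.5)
Your proposal is correct and matches the paper's proof: a subset $X \subseteq V_1$ certifies non-subsumption, and under free capacity Lemma~\ref{lem:comp} makes both $X \in \mathrm{Comp}(C_1)$ and $\mu(X) \notin \mathrm{Comp}(C_2)$ checkable in polynomial time. Your closing point, that free capacity is what spares you from certifying the nonexistence of a plan for $C_2$, also matches the paper's own remark on why the bounded-capacity case is left open.
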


\begin{proof}
A certificate of non-subsumption is a set $X \subseteq V_1$ together with the claims that $X \in \mathrm{Comp}(C_1)$ and $\mu(X) \notin \mathrm{Comp}(C_2)$. Under free capacity, Lemma~\ref{lem:comp} makes both claims checkable in polynomial time, testing prerequisite-closure and the requirement thresholds in each curriculum. Hence non-subsumption is in \textup{NP} and subsumption in \textup{coNP}. (The membership argument relies on the free-capacity characterization of $\mathrm{Comp}$; under a bounded capacity, deciding membership in $\mathrm{Comp}(C)$ is itself a capacity-scheduling question, and the complexity of subsumption is left open.)
\end{proof}

\begin{theorem}[Subsumption is coNP-complete]\label{thm:subhard}
Deciding $C_1 \sqsubseteq_\mu C_2$ is \textup{coNP}-complete, already when $C_1$ and $C_2$ are prerequisite-free (hence conjunctive) with free capacity and $\mu$ is a bijection.
\end{theorem}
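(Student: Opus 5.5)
We prove coNP-hardness by reducing \textsc{Vertex Cover} (a special case of \textsc{Set Cover}, NP-hard by Proposition~\ref{fact:setcover} and \citet{karp1972}) to the \emph{complement} of subsumption, i.e.\ to non-subsumption. Membership in coNP is Proposition~\ref{prop:submember}, so this gives completeness.

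\emph{Reading non-subsumption as covering plus counting.} When both curricula are prerequisite-free with free capacity, condition (a) of Lemma~\ref{lem:comp} is vacuous. Hence $\mathrm{Comp}(C)$ is exactly the family of sets $X$ meeting every credit threshold of $R$. So $C_1 \not\sqsubseteq_\mu C_2$ holds iff some $X \subseteq V_1$ meets all thresholds of $R_1$ while $\mu(X)$ falls short of some threshold of $R_2$. The idea is to let $R_1$ encode the covering constraints and let a single group of $R_2$ encode a cardinality upper bound: a threshold is violated exactly when few courses are taken.

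\emph{Construction.} Take a graph $G=(W,F)$ with $|W| = n$ and an integer $t$.
\begin{enumerate}[label=(\roman*),nosep]
  \item Let $V_1 = \{x_u : u \in W\}$ and $V_2 = \{y_u : u \in W\}$, all courses of unit credit with $P \equiv \top$, and $\kappa = \infty$.
  \item Let $\mu = \{(x_u,y_u)\}$, a bijection.
  \item Let $R_1 = \{(\{x_u,x_{u'}\},1) : uu' \in F\}$.
  \item Let $R_2 = \{(V_2, t+1)\}$.
\end{enumerate}

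We assume $t < n$; otherwise the instance is trivially a yes-instance and we output a fixed yes-instance instead. Under this assumption the well-formedness condition $r_g \le \credits(S_g)$ holds, since $t+1 \le n$, and the edge groups satisfy it trivially.

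\emph{Correctness.} By the reading above:
\begin{itemize}[nosep]
  \item $X \in \mathrm{Comp}(C_1)$ iff $\{u : x_u \in X\}$ is a vertex cover of $G$;
  \item $\mu(X) \notin \mathrm{Comp}(C_2)$ iff $|\mu(X)| = |X| \le t$.
\end{itemize}
Therefore $C_1 \not\sqsubseteq_\mu C_2$ iff $G$ has a vertex cover of size at most $t$. The reduction is clearly polynomial, so subsumption is coNP-hard, and all weights equal one.

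The only delicate step is the bookkeeping just handled: the model's constraint $r_g \le \credits(S_g)$ (which forces the case $t \ge n$ to be treated separately), and the observation that the whole difficulty lives in a single ``reverse'' threshold on $C_2$. Once non-subsumption is seen as ``a cover that is small'', the argument is routine.
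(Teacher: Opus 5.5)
Your reduction is correct, but it is not the route the paper takes for this theorem. The paper reduces from the complement of \textsc{Subset Sum}. It uses a single weighted pool $Z$ with $R_1=\{(Z,b)\}$ and $R_2=\{(Z,b+1)\}$, so subsumption fails exactly when some subset of $Z$ weighs exactly $b$.

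Your argument is essentially the paper's separate Proposition~\ref{prop:substrong}, which reduces from the complement of \textsc{Set Cover} with unit credits. There is one small simplification: the paper's $C_2$ repeats the element groups of $C_1$ alongside the cardinality group, whereas you keep only $(V_2,t+1)$. Dropping the element groups is harmless, since under a bijective $\mu$ every completion of $C_1$ already satisfies them.

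The trade-off between the two routes is as follows. Your construction gives coNP-hardness with unit credits, so it proves the theorem and its strong form at once. It does, however, need many overlapping requirement groups in $C_1$. The paper's construction needs only one requirement group per curriculum, at the price of large integers. With unit credits, one group per side and a bijective $\mu$, subsumption is a simple counting check and hence polynomial. The two proofs therefore isolate complementary sources of hardness, large numbers versus overlapping groups, exactly as Theorems~\ref{thm:loadelective} and~\ref{thm:loadover} do for the load objective.

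Two minor points. First, Proposition~\ref{fact:setcover} certifies hardness of \textsc{Set Cover}, not of its special case \textsc{Vertex Cover}, so appealing to it runs in the wrong direction. Your citation of Karp's direct result repairs this. In any case the specialization is unnecessary: the same construction with groups $(\{x_i : e\in F_i\},1)$ works verbatim for general \textsc{Set Cover}. Second, the ``fixed yes-instance'' you output when $t\ge n$ must be a non-subsuming pair. One is easily found inside the restricted class: take $C_1$ with one course and no groups, and $C_2$ with one course $y$ and group $(\{y\},1)$.
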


\begin{proof}
Membership is Proposition~\ref{prop:submember}. For hardness we reduce from the complement of \textsc{Subset Sum}, which is coNP-complete because \textsc{Subset Sum} is NP-complete \citep{gj1979}. An instance of \textsc{Subset Sum} is a list of positive integers $a_1,\dots,a_N$ and a target $b$; we may assume $0 < b < \sum_i a_i$. Let $Z = \{z_1,\dots,z_N\}$ be a set of courses with $\credits(z_i) = a_i$ and no prerequisites, and take $\mu$ the identity on $Z$. Define
\[
  C_1 : R_1 = \{ (Z,\, b) \}, \qquad C_2 : R_2 = \{ (Z,\, b+1) \}, \qquad \kappa = \infty \text{ in both.}
\]
By Lemma~\ref{lem:comp}, since all courses are prerequisite-free, $\mathrm{Comp}(C_1) = \{ X \subseteq Z : \credits(X) \ge b\}$ and $\mathrm{Comp}(C_2) = \{ X \subseteq Z : \credits(X) \ge b+1 \}$. Therefore
\begin{align*}
  C_1 \sqsubseteq_\mu C_2
  &\iff \big(\forall X \subseteq Z:\ \credits(X) \ge b \Rightarrow \credits(X) \ge b+1\big) \\
  &\iff \neg\,\exists X \subseteq Z:\ \credits(X) = b,
\end{align*}
the last equivalence because the weights are integers, so $\credits(X) \ge b$ and $\credits(X) < b+1$ together mean $\credits(X) = b$. Hence $C_1 \sqsubseteq_\mu C_2$ holds if and only if the \textsc{Subset Sum} instance is a no-instance. The construction is polynomial, so the complement of \textsc{Subset Sum} reduces to subsumption, and subsumption is \textup{coNP}-hard.
\end{proof}

\begin{proposition}[Strong hardness of subsumption]\label{prop:substrong}
Subsumption is \textup{coNP}-hard already for prerequisite-free, free-capacity curricula in which every credit is one, so its hardness does not rest on large integers.
\end{proposition}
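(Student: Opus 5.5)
Since every credit is one, the subset-sum route of Theorem~\ref{thm:subhard} is unavailable, so I would reduce from the complement of \textsc{Set Cover}, or more conveniently of \textsc{Vertex Cover}, using overlapping requirement groups as in Theorem~\ref{thm:loadover}. The idea is to let $C_1$ require only an abstract count of courses, and $C_2$ require coverage, so that non-subsumption is witnessed exactly by a small non-covering selection. Since this sends small covers to failures of subsumption, I would instead encode the \emph{existence} of a small cover as a failure of subsumption. I would do this through complements: a set $X$ with few courses omits many, and coverage constraints on what is omitted express independence.

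The construction is as follows. Given a graph $G=(W,F)$ and an integer $k$, decide whether $G$ has an independent set of size $k$, which is NP-complete. Take courses $Z=\{z_w : w\in W\}$, all of credit one, with no prerequisites and free capacity, and let $\mu$ be the identity. Let $C_1$ have the single group $(Z,\,|W|-k)$, so that $\mathrm{Comp}(C_1)=\{X : |X|\ge |W|-k\}$ by Lemma~\ref{lem:comp}. Let $C_2$ have, for each edge $\{u,w\}\in F$, the group $(\{z_u,z_w\},1)$, so that $\mathrm{Comp}(C_2)$ is the set of vertex covers. Then $C_1\not\sqsubseteq_\mu C_2$ if and only if some $X$ with $|X|\ge|W|-k$ is not a vertex cover. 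That condition holds if and only if some edge has both endpoints outside $X$, where the complement of $X$ has at most $k$ vertices. This is not yet independence, so the groups must be adjusted.

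The better choice is to make $C_2$'s requirement the \emph{complement} condition. For each edge, use a requirement saying that not both endpoints are omitted, which is exactly coverage. Then non-subsumption asks for a set of at most $k$ vertices, extendable without loss to exactly $k$, that contains an edge. That is trivial whenever $k\ge2$ and $F\neq\emptyset$, so the construction must be reworked to the form: every $X$ with $|X|\ge|W|-k$ (every omission set $O$ with $|O|\le k$) satisfies the requirements of $C_2$, if and only if $G$ has no vertex cover of size at most $k$.

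Vertex cover fits this form. Put one group per \emph{vertex} neighbourhood, or more directly reuse Theorem~\ref{thm:loadover}: let $C_2$ have groups $(S_e,1)$ with $S_e=\{x_i : e\in F_i\}$ for a set-cover instance, and let $C_1$ require $|X|\ge s-t$ from all $s$ courses. This is still the wrong polarity, since a cover inside $X$ is needed, not outside it. So I would finally take $C_1$ requiring $(Z,t)$, that is $|X|\ge t$, hoping that subsumption means every $t$-set is a cover. That is a valid coNP-hard question only for suitable families. I would therefore pick a known coNP-complete unit-weight statement: ``every $t$-subset of the family covers $U$'' fails exactly when some $s-t$ sets miss an element. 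This fails exactly when some element lies in at most $s-t$ sets, which is polynomial.

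So the main obstacle is polarity. A monotone threshold $C_1$ against a monotone covering $C_2$ tends to be easy, and a genuinely hard unit-weight instance needs $C_1$ itself to carry covering groups. My final plan is to let $C_1$ encode the \emph{feasible solutions} of an NP-complete problem as its completions, via overlapping unit groups. For example, take \textsc{Vertex Cover}: $C_1$ has the edge groups $(\{z_u,z_w\},1)$, so its completions are the vertex covers. Let $C_2$ impose the single group $(Z,k+1)$. Then $C_1\sqsubseteq_\mu C_2$ if and only if every vertex cover has more than $k$ vertices, that is, if and only if $G$ has no vertex cover of size $\le k$. By Lemma~\ref{lem:comp} both completion sets are exactly as described, every integer is at most $|W|$ and the credits are one, and coNP-hardness transfers from the complement of \textsc{Vertex Cover} \citep{karp1972}. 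The polarity check is the step I would verify most carefully.
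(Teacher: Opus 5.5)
Your final construction is correct and is essentially the paper's proof. You put the edge groups $(\{z_u,z_w\},1)$ in $C_1$, so that $\mathrm{Comp}(C_1)$ is the set of vertex covers, and the single group $(Z,k+1)$ in $C_2$; the paper reduces from the complement of \textsc{Set Cover}, of which \textsc{Vertex Cover} is a special case, and gives $C_2$ the same covering groups plus $(\{x_1,\dots,x_s\},t+1)$, which is equivalent because subsumption only tests images of $C_1$-completions. When you write it up, dispose of the trivial case $k\ge|W|$ separately so that $(Z,k+1)$ respects $r_g\le\credits(S_g)$ (the paper leaves this implicit too), and drop the abandoned polarity attempts.
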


\begin{proof}
Reduce from the complement of \textsc{Set Cover}. Given a universe $U$, a family $F_1,\dots,F_s \subseteq U$, and an integer $t$, let both curricula share the prerequisite-free unit-credit courses $\{x_1,\dots,x_s\}$, with $\mu$ the identity. Give $C_1$ the element groups $(S_e, 1)$, $S_e = \{ x_i : e \in F_i\}$, so by Lemma~\ref{lem:comp} $\mathrm{Comp}(C_1)$ is exactly the covers of $U$, and give $C_2$ those same groups together with one further group $(\{x_1,\dots,x_s\},\, t+1)$, so $\mathrm{Comp}(C_2)$ is the covers of size at least $t+1$. Then $C_1 \sqsubseteq_\mu C_2$ holds iff every cover of $U$ has size at least $t+1$, that is, iff $U$ has no cover of size at most $t$. Since \textsc{Set Cover} is NP-hard and carries only unit numbers, its complement is \textup{coNP}-hard with unit credits, and subsumption inherits this.
\end{proof}

\subsection{Consensus prerequisite recovery}

We treat the conjunctive case, in which the prerequisite of each course is a set of required predecessors, so a curriculum's prerequisite structure is its support digraph. For an assignment $P$ let $A(P) \subseteq V \times V$ be the corresponding arc set.

\begin{definition}[Consensus]\label{def:consensus}
Given conjunctive prerequisite assignments $P_1,\dots,P_K$ over a common course set $V$, which need not themselves be acyclic since they are noisy observations (the corpus of Section~\ref{sec:empirical} contains such cycles as parsing artifacts), the \emph{consensus} problem asks for an \emph{acyclic} assignment $P$ minimizing
\[
  \Delta(P) \;\eqdef\; \sum_{k=1}^{K} \big| A(P) \,\triangle\, A(P_k) \big|,
\]
where $\triangle$ is symmetric difference; the decision version asks whether $\Delta(P) \le D$ for a given bound $D$.
\end{definition}

\begin{theorem}[Consensus is NP-complete]\label{thm:consensus}
The consensus decision problem is NP-complete, already for a single input assignment ($K = 1$).
\end{theorem}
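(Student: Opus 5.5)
The plan is to reduce from \textsc{Feedback Arc Set}: given a digraph $G=(V,E)$ and an integer $q$, decide whether some $F\subseteq E$ with $|F|\le q$ makes $(V,E\setminus F)$ acyclic. This problem is NP-complete by \citet{karp1972}. Membership in NP is immediate. A certificate is the assignment $P$ itself, at most $n^2$ arcs. Acyclicity of $A(P)$ is checked in linear time by Proposition~\ref{fact:topo}, and $\Delta(P)$ is computed by comparing arc sets in time polynomial in $n$ and $K$.

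For hardness with $K=1$, I take the single input assignment $P_1$ to have $A(P_1)=E$. That is, $P_1(v)=\bigwedge_{(u,v)\in E}u$, a conjunctive assignment that Definition~\ref{def:consensus} explicitly allows to be cyclic. I set $D=q$. The key identity is that, for any acyclic $P$,
\[
  |A(P)\,\triangle\,E| \;=\; |E\setminus A(P)| + |A(P)\setminus E|.
\]

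Forward direction: if $F$ is a feedback arc set with $|F|\le q$, I take $P$ with $A(P)=E\setminus F$. This $P$ is acyclic, and $\Delta(P)=|F|\le q$.

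Reverse direction: let $P$ be acyclic with $\Delta(P)\le q$. Then $A(P)\cap E=E\setminus(E\setminus A(P))$ is a subgraph of the acyclic digraph $A(P)$, so it is acyclic. Hence $F\eqdef E\setminus A(P)$ is a feedback arc set. Its size is $|F|\le|A(P)\triangle E|\le q$, because extra arcs in $A(P)\setminus E$ only add to the cost. So adding arcs never helps, and optimal consensus equals minimum feedback arc set exactly.

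One technical point must be handled: self-loops, and whether $A(P)$ is allowed to contain arcs outside the support. The paper's model requires arcs $(u,v)$ with $u$ occurring in $P(v)$, so a self-loop $(v,v)$ in $E$ would correspond to $v$ appearing in its own prerequisite. I will either assume $G$ has no self-loops, which is standard and harmless for \textsc{Feedback Arc Set}, or note that a self-loop must be deleted in both problems and contributes $1$ to each cost. With that settled the reduction is clearly polynomial.

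I expect the main obstacle to be bookkeeping rather than substance: making sure the consensus objective counts exactly the deleted arcs under the symmetric difference, which the argument above handles. I would also note that all numbers involved are bounded by $|E|$, so the hardness is strong.
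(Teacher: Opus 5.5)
Your proposal is correct and follows the paper's proof: both use the same reduction from \textsc{Feedback Arc Set} with $K=1$, $A(P_1)=E$ and $D=q$. Your reverse direction, taking $F=E\setminus A(P)$, which is valid because $A(P)\cap E\subseteq A(P)$ is acyclic, is the decision-version form of the paper's observation that replacing $A$ by $A\cap E$ never increases $\Delta$; the self-loop and strong-hardness remarks are harmless additions.
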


\begin{proof}
\emph{Membership.} A candidate $P$ is an assignment over $V$, of polynomial size; acyclicity of $A(P)$ and the value $\Delta(P)$ are checked in polynomial time, so the problem is in NP.

\emph{Hardness.} We reduce from \textsc{Minimum Feedback Arc Set}, which is NP-complete \citep{karp1972}: given a digraph $G = (V,E)$ and an integer $f$, decide whether some set of at most $f$ arcs meets every directed cycle, equivalently whether $G$ has an acyclic subgraph with at least $|E| - f$ arcs. Take $K = 1$ and let $P_1$ be the conjunctive assignment with $A(P_1) = E$. For any acyclic assignment $P$ with arc set $A \eqdef A(P)$,
\[
  \Delta(P) = |A \triangle E| = |E \setminus A| + |A \setminus E|.
\]
Replacing $A$ by $A \cap E$ removes the second term without creating a cycle and without increasing the first, so an optimal $P$ satisfies $A \subseteq E$, whence $\Delta(P) = |E \setminus A| = |E| - |A|$. Minimizing $\Delta$ over acyclic $A \subseteq E$ therefore maximizes $|A|$ over acyclic subgraphs of $G$, and
\[
  \min_{P \text{ acyclic}} \Delta(P) \;=\; |E| - \max\{\, |A| : A \subseteq E,\ (V,A) \text{ acyclic}\,\}
\]
equals the minimum feedback arc set size of $G$. Thus there is an acyclic $P$ with $\Delta(P) \le f$ if and only if $G$ has a feedback arc set of size at most $f$. The reduction is polynomial.
\end{proof}

\begin{remark}
The two results delimit the model's comparison questions exactly: subsumption is complete for \textup{coNP} and consensus for \textup{NP}, and both remain hard under the strongest structural restrictions, prerequisite-free curricula for subsumption and a single input for consensus. The source of consensus hardness is isolated cleanly, namely the acyclicity requirement on the output: without it the arc-wise majority vote is optimal and the problem is trivial, so the difficulty is entirely that of reconciling conflicting prerequisites into a well-founded whole. The single-input reduction relies on a cyclic observation and is cleanest as a hardness embedding; the empirically motivated case, in which $K \ge 3$ individually acyclic curricula are reconciled and cycles appear only in their aggregate, is the setting the application meets. Theorem~\ref{thm:consensus} establishes NP-completeness of the problem in general, and we conjecture that hardness persists for this restricted multi-input case, an acyclic-aggregation question we leave open. The approximability of consensus remains a natural next question, inheriting the standing bounds for feedback arc set; the parallel question for subsumption, whether its hardness rests on large integers, is settled in the negative by Proposition~\ref{prop:substrong}, which gives \textup{coNP}-hardness already with unit credits.
\end{remark}

\section{Conclusion}\label{sec:summary}

The model of Definitions~\ref{def:curriculum}--\ref{def:obj} reduces a curriculum to three structural commitments beyond a bare precedence graph, namely the Boolean form of prerequisites, the credit-threshold form of requirements, and the per-term capacity. The theorems then attribute the difficulty of degree planning to those commitments precisely. Recognition is linear (Proposition~\ref{prop:membership}). Time to degree is polynomial whenever capacity is unbounded, in full generality over disjunction and electives (Theorem~\ref{thm:timefree}), and becomes strongly NP-hard the moment capacity binds, already without prerequisites (Theorem~\ref{thm:capacityhard}). Load is complementary: polynomial on the conjunctive and mandatory fragment (Proposition~\ref{prop:loadconj}) and strongly NP-hard through either of two independent commitments, disjunction (Theorem~\ref{thm:loadhard}) or overlapping electives (Theorem~\ref{thm:loadover}). The two objectives thus have disjoint sources of hardness (Theorem~\ref{thm:duality}): capacity governs time, and disjunction or electives govern load. Finally, the delay-factor component of the curricular-complexity metric is revealed as a polynomially computable upper bound on time to degree, exact on the conjunctive fragment and loose by the disjunctive slack elsewhere (Proposition~\ref{prop:metric}), which is the formal counterpart of the empirical correction measured across the corpus. Beyond the two objectives, the model's comparison questions are complete for their natural classes: under free capacity, program subsumption is coNP-complete (Theorem~\ref{thm:subhard}), through a subset-sum obstruction that survives even prerequisite-free curricula, and consensus prerequisite recovery is NP-complete (Theorem~\ref{thm:consensus}), its hardness residing wholly in the acyclicity demanded of the reconciled relation. Instantiated on a corpus of twenty-two universities (Section~\ref{sec:empirical}), the model places real curricula on the tractable side of every boundary but capacity: nearly nine in ten courses are purely conjunctive; the curriculum-level disjunctive slack is zero throughout, not because chains are shallow but because the deepest chain of every program runs through mandatory conjunctive courses that no alternative shortens; and the course-level slack is positive on roughly one course in eleven, concentrated at the institutions publishing the most alternatives. Prerequisite chains are in any case shallow enough that capacity, not prerequisite logic, is the constraint governing timely completion.

\section*{Acknowledgment}
The authors gratefully acknowledge the College of Computing and Artificial Intelligence and the Research and Innovation Office at the United Arab Emirates University for their support, which covered the article processing charge for open-access publication and provided the institutional resources, including computational facilities, library access, and infrastructural support, that underpinned the analysis reported in this work.

\setlength{\bibsep}{2pt plus 1pt}% compact reference list (avoids a one-entry final page)
\bibliographystyle{plainnat-doi}
\bibliography{ref}

\end{document}